\newtheorem{proposition}{Proposition}
\newtheorem{definition}{Definition}
\newtheorem{theorem}{Theorem}
\newtheorem{lemma}{Lemma}
\newproof{proof}{Proof}
\def\NP{\mathbf{NP}}
\def\root{{\textsc{root}}}
\def\size{{\textsc{size}}}
\def\parent{{\textsc{parent}}}
\def\merge{{\textsc{merge}}}
\def\compress{\textsc{collapse}}
\def\children{{\textsc{children}}}
\def\push{{\textsc{push}}}
\def\sumsize{{\textsc{sumsize}}}
\def\sumlight{{\textsc{sumlight}}}
\def\levelone#1{\children(#1)}
\def\drawweight#1#2#3#4#5#6{
{
\node (r) at (#1,#2) [draw] {};
\path (r) edge #6;
\foreach \i in {1,...,#3}{
  \node (c\i) at ({#1-#4/2+(\i-1)*#4/(#3-1)},#5) [draw] {};
  \path (c\i) edge (r);
}
}
}
\begin{document}
\title{Recognizing Union-Find trees is $\NP$-complete\tnoteref{thanksfn}}
\author[kitti]{Kitti~Gelle}
\author[szabivan]{Szabolcs~Iv\'an}
\address[kitti,szabivan]{University of Szeged, Hungary}
\tnotetext[thanksfn]{This research was supported by NKFI grant number K108448.}
\begin{abstract}
Disjoint-Set forests, consisting of Union-Find trees are data structures having a widespread practical application due to their efficiency.
Despite them being well-known, no exact structural characterization of these trees is known (such a characterization exists for 
Union trees which are constructed without using path compression).
In this paper we provide such a characterization by means of a simple $\push$ operation
and show that the decision problem whether a given tree is a Union-Find tree is $\NP$-complete.
\end{abstract}
\maketitle
\section{Introduction}
Disjoint-Set forests, introduced in~\cite{Galler:1964:IEA:364099.364331},
are fundamental data structures in many practical algorithms where one has to maintain a partition of some set,
which support three operations: \emph{creating} a partition consisting of singletons, \emph{querying} whether two given elements are
in the same class of the partition (or equivalently: \emph{finding} a representative of a class, given an element of it)
and \emph{merging} two classes.
Practical examples include e.g.
building a minimum-cost spanning tree of a weighted graph~\cite{Cormen:2001:IA:580470},
unification algorithms~\cite{Knight:1989:UMS:62029.62030}
etc.

To support these operations, even a linked list representation suffices but to achieve an almost-constant amortized time cost per operation, Disjoint-Set forests are used in practice. In this data structure, sets are represented as directed trees with the edges directed towards the root; the $\textsc{create}$ operation creates $n$ trees having one node each (here $n$ stands for the number of the elements in the universe), the
$\textsc{find}$ operation takes a node and returns the root of the tree in which the
node is present (thus the $\textsc{same-class}(x,y)$ operation is implemented as
$\textsc{find}(x)==\textsc{find}(y)$), and the $\textsc{merge}(x,y)$ operation
is implemented by merging the trees containing $x$ and $y$, i.e. making one of the root nodes to be a child of the other root node (if the two nodes are in different classes).

In order to achieve near-constant efficiency, one has to keep the (average) height of the trees small.
There are two ``orthogonal'' methods to do that: first, during the merge operation it is advisable to attach the ``smaller'' tree below the ``larger''
one. If the ``size'' of a tree is the number of its nodes, we say the trees are built up according to the \emph{union-by-size} strategy,
if it's the depth of a tree, then we talk about the \emph{union-by-rank} strategy. Second, during a $\textsc{find}$ operation invoked on some node $x$
of a tree, one can apply the \emph{path compression} method, which reattaches each ancestor of $x$ directly to the root of the tree in which they are
present. If one applies both the path compression method and either one of the union-by-size or union-by-rank strategies, then any sequence of
$m$ operations on a universe of $n$ elements has worst-case time cost $O(m\alpha(n))$ where $\alpha$ is the inverse of the extremely fast growing
(not primitive recursive) Ackermann function for which $\alpha(n)\leq 5$ for each practical value of $n$ (say, below $2^{65535}$), hence it has
an amortized almost-constant time cost~\cite{Tarjan:1975:EGB:321879.321884}. Since it's proven~\cite{Fredman:1989:CPC:73007.73040} that \emph{any} data structure has worst-case time cost $\Omega(m\alpha(n))$, the Disjoint-Set forests equipped with a strategy and path compression offer a theoretically optimal data structure which performs exceptionally well also in practice.
For more details see standard textbooks on data structures, e.g.~\cite{Cormen:2001:IA:580470}.

Due to these facts, it is certainly interesting both from the theoretical as well as the practical point of view to characterize those trees
that can arise from a forest of singletons after a number of merge and find operations, which we call Union-Find trees in this paper.
One could e.g. test Disjoint-Set implementations since if at any given point of execution a tree of a Disjoint-Set forest is not a valid
Union-Find tree, then it is certain that there is a bug in the implementation of the data structure (though we note at this point that this
data structure is sometimes regarded as one of the ``primitive'' data structures, in the sense that is is possible to implement a correct
version of them that needs not be certifying~\cite{DBLP:journals/csr/McConnellMNS11}). Nevertheless, only the characterization
of Union trees is known up till now~\cite{DBLP:journals/ipl/Cai93}, i.e. which correspond to the case when one uses one of the union-by- strategies but
\emph{not} path compression. Since in that case the data structure offers only a theoretic bound of $\Theta(\log n)$ on the amortized time cost,
in practice all implementations imbue path compression as well, so for a characterization to be really useful, it has to cover this case
as well.

In this paper we show that the recognition problem of Union-Find trees is $\NP$-complete when the union-by-size strategy is used
(and leave open the case of the union-by-rank strategy).
This confirms the statement from~\cite{DBLP:journals/ipl/Cai93} that the problem ``seems to be much harder''
than recognizing Union trees (which in turn can be done in low-degree polynomial time).

{\bf Related work.} There is an increasing interest in determining the complexity of the recognition problem of various
data structures. The problem was considered for suffix trees~\cite{I2014316,Starikovskaya201514},
(parametrized) border arrays~\cite{I20116959,Lu2002,Duval:2005:BAB:1131983.1131987,MR2544434,MR2894365},
suffix arrays~\cite{Bannai2003208,Duval2002249,Kucherov2013915},
KMP tables~\cite{Duval2009281,Gawrychowski2014337},
prefix tables~\cite{DBLP:conf/stacs/ClementCR09}, 
cover arrays~\cite{Crochemore2010251}, and directed acyclic word- and subsequence graphs~\cite{Bannai2003208}.

\section{Notation}
A \emph{tree} is a tuple $t=(V_t,\root_t,\parent_t)$ with $V_t$ being the finite set of its \emph{nodes}, $\root_t\in V_t$ its \emph{root} and $\parent_t:(V_t-\{\root_t\})\to V_t$ mapping each non-root node to its
\emph{parent} (so that the graph of $\parent_t$ is a directed acyclic graph, with edges being directed towards the root).

For a tree $t$ and a node $x\in V_t$, let $\children(t,x)$ stand for the set
$\{y\in V_t:\parent_t(y)=x\}$ of its \emph{children} and $\levelone{t}$ stand as a shorthand
for $\children(t,\root_t)$, the set of depth-one nodes of $t$. Two nodes are \emph{siblings} in $t$ if they have the same parent.
Also, let $x\preceq_t y$ denote that $x$ is an \emph{ancestor} of $y$ in $t$, i.e. $x=\parent_t^k(y)$
for some $k\geq 0$ and let $\size(t,x)=|\{y\in V_t:x\preceq_t y\}|$ stand for the number
of \emph{descendants} of $x$ (including $x$ itself). Let $\size(t)$ stand for $\size(t,\root_t)$, the number of nodes in the tree $t$.
For $x\in V_t$, let $t|_x$ stand for the \emph{subtree} $(V_x=\{y\in V_t:x\preceq_t y\},x,\parent_t|_{V_x})$ of $t$ rooted at $x$.
When $x,y\in V_t$, we say that $x$ is \emph{lighter} than $y$ in $t$ (or $y$ is \emph{heavier} than $x$) if $\size(t,x)<\size(t,y)$.

Two operations on trees are that of \emph{merging} and \emph{collapsing}.
Given two trees $t=(V_t,\root_t,\parent_t)$ and $s=(V_s,\root_s,\parent_s)$ with $V_t$ and $V_s$ being disjoint,
their merge $\textsc{merge}(t,s)$ (in this order) is the tree $(V_t\cup V_s,\root_t,\parent)$ with
$\parent(x)=\parent_t(x)$ for $x\in V_t$, $\parent(\root_s)=\root_t$ and $\parent(y)=\parent_s(y)$ for each non-root node $y\in V_s$ of $s$.
Given a tree $t=(V,\root,\parent)$ and a node $x\in V$, then $\compress(t,x)$ is the tree $(V,\root,\parent')$
with $\parent'(y)=\root$ if $y$ is a non-root ancestor of $x$ in $t$, and $\parent'(y)=\parent(y)$ otherwise.
For examples, see Figure~\ref{fig-merge-collapse-push}.
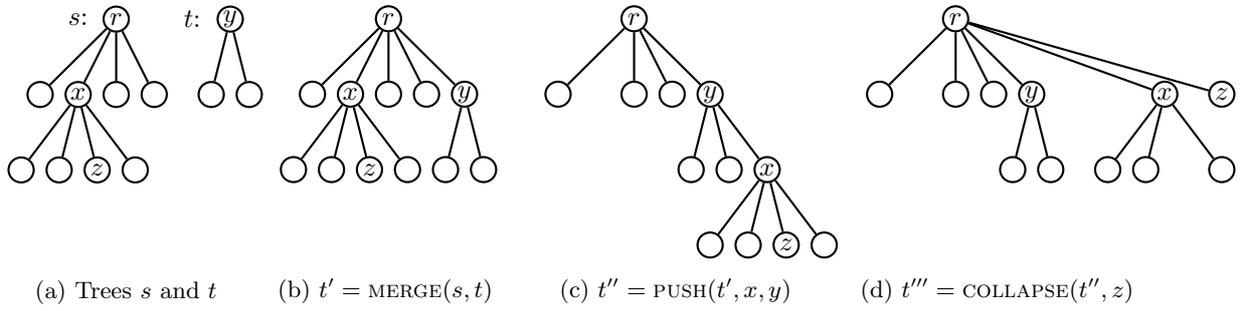
\begin{figure*}[h]
\begin{subfigure}[c]{0.2\textwidth}
\centering
\begin{tikzpicture}[thick]
\draw[white] (0,1.2) rectangle (1,-2.2);
\node at (1,1) {$s$:};
\node at (2.5,1) {$t$:};
\node[circle] (r) at (1.5,1) [draw,radius=0.2cm] {};
\foreach \i in {1,...,4}{
\node[circle] (c\i) at (\i/2,0) [draw,radius=0.2cm] {};
\path (r) edge (c\i);
}
\foreach \i in {0,...,3}{
\node[circle] (cc\i) at (\i/2+0.25,-1) [draw,radius=0.2cm] {};
}
\path (cc0) edge (c2);
\path (cc1) edge (c2);
\path (cc2) edge (c2);
\path (cc3) edge (c2);
\node[left of=c2, node distance=0mm] {$x$};
\node[left of=cc2, node distance=0mm] {$z$};
\node[left of=r, node distance=0mm] {$r$};
\node[circle] (r2) at (3,1) [draw,radius=0.2cm] {};
\foreach \i in {0,...,1}{
\node[circle] (c\i) at (\i/2+2.75,0) [draw,radius=0.2cm] {};
\path (r2) edge (c\i);
}
\node[left of=r2, node distance=0mm] {$y$};
\end{tikzpicture}
\caption{Trees $s$ and $t$}
\end{subfigure}
\begin{subfigure}[c]{0.20\textwidth}
\centering
\begin{tikzpicture}[thick]
\draw[white] (0,1.2) rectangle (1,-2.2);
\node[circle] (r) at (1.5,1) [draw,radius=0.2cm] {};
\foreach \i in {1,...,5}{
\node[circle] (c\i) at (\i/2,0) [draw,radius=0.2cm] {};
\path (r) edge (c\i);
}
\foreach \i in {0,...,5}{
\node[circle] (cc\i) at (\i/2+0.25,-1) [draw,radius=0.2cm] {};
}
\path (cc0) edge (c2);
\path (cc1) edge (c2);
\path (cc2) edge (c2);
\path (cc3) edge (c2);
\path (cc4) edge (c5);
\path (cc5) edge (c5);
\node[left of=c2, node distance=0mm] {$x$};
\node[left of=c5, node distance=0mm] {$y$};
\node[left of=cc2, node distance=0mm] {$z$};
\node[left of=r, node distance=0mm] {$r$};
\end{tikzpicture}
\caption{$t'=\merge(s,t)$}
\end{subfigure}
\begin{subfigure}[c]{0.25\textwidth}
\centering
\begin{tikzpicture}[thick]
\draw[white] (0,1.2) rectangle (1,-2.2);
\node[circle] (r) at (1.5,1) [draw,radius=0.2cm] {};
\foreach \i in {1,3,4,5}{
\node[circle] (c\i) at (\i/2,0) [draw,radius=0.2cm] {};
\path (r) edge (c\i);
}
\foreach \i in {4,5}{
\node[circle] (cc\i) at (\i/2+0.25,-1) [draw,radius=0.2cm] {};
}
\foreach \i in {2}{
\node[circle] (c\i) at (3.25,-1) [draw,radius=0.2cm] {};
}
\foreach \i in {0,...,3}{
\node[circle] (cc\i) at (\i/2+2.5,-2) [draw,radius=0.2cm] {};
}
\path (c2) edge (c5);
\path (cc0) edge (c2);
\path (cc1) edge (c2);
\path (cc2) edge (c2);
\path (cc3) edge (c2);
\path (cc4) edge (c5);
\path (cc5) edge (c5);
\node[left of=c2, node distance=0mm] {$x$};
\node[left of=c5, node distance=0mm] {$y$};
\node[left of=cc2, node distance=0mm] {$z$};
\node[left of=r, node distance=0mm] {$r$};
\end{tikzpicture}
\caption{$t''=\push(t',x,y)$}
\end{subfigure}
\begin{subfigure}[c]{0.25\textwidth}
\centering
\begin{tikzpicture}[thick]
\draw[white] (0,1.2) rectangle (1,-2.2);
\node[circle] (r) at (1.5,1) [draw,radius=0.2cm] {};
\foreach \i in {1,3,4,5}{
\node[circle] (c\i) at (\i/2,0) [draw,radius=0.2cm] {};
\path (r) edge (c\i);
}
\foreach \i in {4,5}{
\node[circle] (cc\i) at (\i/2+0.25,-1) [draw,radius=0.2cm] {};
}
\foreach \i in {2}{
\node[circle] (c\i) at (4.25,0) [draw,radius=0.2cm] {};
}
\foreach \i in {0,1,3}{
\node[circle] (cc\i) at (\i/2+3.5,-1) [draw,radius=0.2cm] {};
}
\foreach \i in {2}{
\node[circle] (cc\i) at (5,0) [draw,radius=0.2cm] {};
}
\path (c2) edge (r);
\path (cc0) edge (c2);
\path (cc1) edge (c2);
\path (cc2) edge (r);
\path (cc3) edge (c2);
\path (cc4) edge (c5);
\path (cc5) edge (c5);
\node[left of=c2, node distance=0mm] {$x$};
\node[left of=c5, node distance=0mm] {$y$};
\node[left of=cc2, node distance=0mm] {$z$};
\node[left of=r, node distance=0mm] {$r$};
\end{tikzpicture}
\caption{$t'''=\compress(t'',z)$}
\end{subfigure}
\caption{Merge, collapse and push.}
\label{fig-merge-collapse-push}
\end{figure*}

The class of \emph{Union trees} is the least class of trees satisfying the following two conditions:
every \emph{singleton tree} (having exactly one node) is a Union tree, and if $t$ and $s$ are Union trees with $\size(t)\geq\size(s)$,
then $\textsc{merge}(t,s)$ is a Union tree as well.

Analogously, the class of \emph{Union-Find trees} is the least class of trees satisfying the following three conditions:
every singleton tree is a Union-Find tree, if $t$ and $s$ are Union-Find trees with $\size(t)\geq\size(s)$, then
$\textsc{merge}(t,s)$ is a Union-Find tree as well, and if $t$ is a Union-Find tree and $x\in V_t$ is a node of $t$,
then $\compress(t,x)$ is also a Union-Find tree.

We'll frequently sum the size of ``small enough'' children of nodes, so we introduce a shorthand also for that: for a tree $t$, a node $x$ of $t$, and a threshold $W\geq 0$, let
$\sumsize(t,x,W)$ stand for $\sum\{\size(t,y):y\in\children(t,x),\size(t,y)\leq W\}$.
We say that a node $x$ of a tree $t$ \emph{satisfies the Union condition} if for each child $y$ of $x$ we have $\sumsize(t,x,W)\geq W$ where $W=\size(t,y)-1$. Otherwise, we say that $x$ \emph{violates the Union condition} (at child $y$).
Then, the characterization of Union trees from~\cite{DBLP:journals/ipl/Cai93} can be formulated in our terms as follows:
\begin{theorem}
\label{thm-union}
A tree $t$ is a Union tree if and only if each node $x$ of $t$ satisfies the Union condition.
\end{theorem}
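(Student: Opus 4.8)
The plan is to read the Union condition as an assertion about the \emph{order} in which the children of a node could have been attached during a construction of $t$, and then to settle the two implications by induction. The crucial remark is that in any construction of a Union tree, the subtree hanging below a node is \emph{frozen} the instant that node stops being a root: a merge only ever makes the current root of the smaller tree a child of the current root of the larger tree, so once a node $y$ acquires a parent it never gains a new descendant. Hence, fixing a node $x$ and listing its children $y_1,\dots,y_k$ in the order $<_x$ in which the edges $(y_i,x)$ were created, each merge's size test turns into the inequalities $1+\sum_{y'<_x y}\size(t,y')\ge\size(t,y)$ for every child $y$, all written with the \emph{final} sizes. I would isolate as a lemma the purely arithmetic equivalence: for a fixed multiset of children sizes such an order exists iff $x$ satisfies the Union condition. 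For ``if'', the witnessing order is smallest-first; for ``only if'' I argue by contraposition — if the condition fails at a child $y$ with $W=\size(t,y)-1$, then in \emph{any} order the first child of size $>W$ to be inserted already violates its inequality, since everything before it is light and, by assumption, the light children sum to less than $W$.

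For the forward direction I would induct on a construction $t=\merge(t_1,t_2)$ with $\size(t_1)\ge\size(t_2)$. Every node other than $\root_{t_1}$ retains its children and subtree sizes from $t_1$ or $t_2$, so the Union condition there is inherited. Only $\root_{t_1}$ is affected, gaining the single new child $\root_{t_2}$: for the old children the relevant sums can only grow, while for $y=\root_{t_2}$ with $W=\size(t_2)-1$ the lone new obligation is $\sumsize(t,\root_{t_1},W)\ge W$. As the new child is itself too heavy to be counted, this reduces to a statement inside $t_1$, which I obtain by applying the induction hypothesis at $\root_{t_1}$ to the \emph{lightest} child of size exceeding $W$ (and, if no such child exists, directly from $\size(t_1)\ge\size(t_2)$). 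This threshold bookkeeping is the step I expect to be the main obstacle, since one must check that passing from the threshold $\size(\text{that child})-1$ down to $W$ drops no counted child.

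For the converse, assuming the Union condition holds everywhere I would show $t$ is a Union tree by induction on $\size(t)=n$, the singleton being the base case, by peeling off one subtree as the last merge. I would pick a child $y$ of the root of \emph{maximum} size $M$; applying the Union condition at the root to $y$ yields $n\ge 2M$, so $M\le n/2$, and the split $s=t|_y$ and $t'$ (the tree obtained from $t$ by deleting the subtree $t|_y$) satisfies $\size(t')\ge\size(s)$, the legal direction for a merge. It then remains to see that both pieces still satisfy the Union condition so that the induction hypothesis applies: for $s$ this is immediate since the relevant subtree sizes are unchanged, and for $t'$ only the root is affected, yet deleting a \emph{maximum}-size child leaves every sum $\sumsize(t,\root_t,W)$ for a surviving child intact, because such a child has size $>W$ for each surviving threshold and was therefore never counted. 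Then $t=\merge(t',s)$ exhibits $t$ as a Union tree, closing the induction.
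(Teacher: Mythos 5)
The paper does not actually prove Theorem~\ref{thm-union}: it is quoted, with a citation, as the known characterization of Union trees from Cai's paper, so there is no in-paper argument to compare yours against. Your blind proof is, as far as I can check, correct and complete. The forward direction (structural induction on the merge derivation) works: the only node whose child multiset changes in $\merge(t_1,t_2)$ is $\root_{t_1}$, the old children's inequalities can only improve, and for the new child the threshold bookkeeping you flag as the delicate step does go through --- if $z$ is the lightest child of $\root_{t_1}$ with $\size(t_1,z)>W=\size(t_2)-1$, then no child has size strictly between $W$ and $\size(t_1,z)$, so $\sumsize(t_1,\root_{t_1},\size(t_1,z)-1)=\sumsize(t_1,\root_{t_1},W)\geq\size(t_1,z)-1\geq W$, and the degenerate case falls back on $\size(t_1)\geq\size(t_2)$. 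The converse is also sound, and the choice of a \emph{maximum}-size child $y$ is doing real work twice: it gives $n\geq 1+M+(M-1)$ hence $\size(t')\geq\size(t|_y)$ so the merge is legal, and it guarantees the deleted child was never counted in any surviving child's $\sumsize$ threshold, so the root of $t'$ still satisfies the Union condition (peeling off a non-maximal child could break the inequality at a heavier sibling). Your opening paragraph's reformulation via insertion orders is a correct observation (the ``freezing'' of subtrees once a node ceases to be a root is exactly why final sizes may be used) but is not actually needed by the two inductions that follow; you could drop it or keep it as motivation.
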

Equivalently, $x$ satisisfies the Union condition if and only if whenever $x_1,\ldots,x_k$ is an enumeration of $\children(t,x)$ such that $\size(t,x_i)\leq\size(t,x_{i+1})$
for each $i$, then $\sum_{j<i}\size(t,x_j)\geq\size(t,x_i)-1$ for each $i=1,\ldots,k$.
(In particular, each non-leaf node has to have a leaf child.)

\section{Structural characterization of Union-Find trees}

Suppose $t$ and $s$ are trees on the same set $V$ of nodes and with the same root $r$.
We write $t\preceq s$ if $x\preceq_t y$ implies $x\preceq_s y$ for each $x,y\in V$.
For an example, consult Figure~\ref{fig-merge-collapse-push}. There, $t'''\preceq t''$
(e.g. $r\preceq_{t'''}x$ and also $r\preceq_{t''}x$);
the reverse direction does not hold since e.g. $y\preceq_{t''}z$ but
$y\not\preceq_{t'''}z$. Also, the reader is encouraged to verify that $t'''\preceq t'\preceq t''$ also holds.

Clearly, $\preceq$ is a partial order on any set of trees (i.e. is a reflexive, transitive and antisymmetric relation).
It is also clear that $t\preceq s$ if and only if $\parent_t(x)\preceq_s x$ holds for each $x\in V-\{r\}$
which is further equivalent to requiring $\parent_t(x)\preceq_s \parent_s(x)$ since $\parent_t(x)$ cannot be $x$.
%

Another notion we define is the (partial) operation $\push$ on trees as follows: when $t$ is a tree and $x\neq y\in V_t$ are siblings in $t$,
then $\push(t,x,y)$ is defined as the tree $(V_t,\root_t,\parent')$ with
$\parent'(z)=\begin{cases}y&\hbox{if }z=x,\\\parent_t(z)&\hbox{otherwise}\end{cases}$, that is, we ``push'' the node $x$
one level deeper in the tree just below its former sibling $y$.
(See Figure~\ref{fig-merge-collapse-push}.)  

We write $t\vdash t'$ when $t'=\push(t,x,y)$ for some $x$ and $y$, and as usual,
$\vdash^*$ denotes the reflexive-transitive closure of $\vdash$.
Observe that when $t'=\push(t,x,y)$, then $\size(t',y)=\size(t,y)+\size(t,x)>\size(t,y)$
and $\size(t',z)=\size(t,z)$ for each $z\neq y$, hence $\vdash^*$ is also a partial ordering on trees.
This is not a mere coincidence:
\begin{proposition}
\label{prop-push-is-lift}
For any pair $s$ and $t$ of trees, $t\preceq s$ if and only if $t\vdash^*s$.
\end{proposition}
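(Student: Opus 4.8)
The plan is to prove the two implications separately, using throughout the integer potential $\Phi(t)=\sum_{z\in V}\size(t,z)$. Since $t\preceq s$ says exactly that every $t$-ancestry relation survives in $s$, we have $\{w:z\preceq_t w\}\subseteq\{w:z\preceq_s w\}$ and hence $\size(t,z)\le\size(s,z)$ for every node $z$; summing gives $\Phi(t)\le\Phi(s)$, and equality forces all these descendant sets to coincide, whence $\preceq_t=\preceq_s$ and $t=s$. This makes $\Phi$ a strictly $\preceq$-monotone quantity, which will both drive an induction and re-confirm the antisymmetry already noted in the text.

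For the direction $t\vdash^* s\Rightarrow t\preceq s$ I would first check a single step: if $t'=\push(t,x,y)$ with $x,y$ siblings having common parent $q$, then in the criterion ``$t\preceq t'$ iff $\parent_t(z)\preceq_{t'}z$ for all $z$'' only $z=x$ needs attention, and there $\parent_t(x)=q=\parent_{t'}(y)\preceq_{t'}y\preceq_{t'}x$. Thus $t\vdash t'\Rightarrow t\preceq t'$, and the full implication follows because $\preceq$ is transitive.

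The substance is the converse $t\preceq s\Rightarrow t\vdash^* s$, which I would prove by induction on $\Phi(s)-\Phi(t)$. The base case $\Phi(s)=\Phi(t)$ gives $t=s$ by the remark above. For the inductive step I first record the key local fact: for siblings $x,y$ of $t$, one has $\push(t,x,y)\preceq s$ iff $y\preceq_s x$, since the push changes only the parent of $x$ (to $y$) while $t\preceq s$ already handles every other node. It therefore suffices to exhibit, whenever $t\neq s$, a pair of $t$-siblings $x,y$ with $y\prec_s x$; pushing then produces $t'$ with $t\vdash t'$, $t'\preceq s$, and $\Phi(t')>\Phi(t)$, so the induction hypothesis applies to $t'$ and $t\vdash t'\vdash^* s$.

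The main obstacle is producing this sibling pair, which I would obtain as follows. As $t\neq s$, some non-root node is \emph{discrepant}, i.e.\ $\parent_t(x)\neq\parent_s(x)$; choose a discrepant $x$ with the fewest proper ancestors in $s$. Writing $q=\parent_t(x)$, the criterion for $t\preceq s$ gives $q\preceq_s\parent_s(x)$, and since $q\neq\parent_s(x)$ we get $q\prec_s\parent_s(x)\prec_s x$, so $q$ is a proper $s$-ancestor of $x$ at $s$-distance at least two. Let $y$ be the $s$-child of $q$ on the $s$-path down to $x$; then $y\prec_s x$ and $y$ has strictly fewer proper $s$-ancestors than $x$, so by minimality $y$ is not discrepant, i.e.\ $\parent_t(y)=\parent_s(y)=q$. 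Hence $x$ and $y$ are distinct $t$-children of $q$ with $y\prec_s x$, exactly the pair required. I expect the only delicate points to be checking $y\neq x$ and $y\neq r$ so that the minimality of $x$ legitimately applies to $y$, both of which follow from the strict inequality $q\prec_s\parent_s(x)$.
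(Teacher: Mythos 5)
Your proof is correct, but for the substantive direction ($t\preceq s\Rightarrow t\vdash^* s$) it takes a genuinely different route from the paper. The paper argues by structural induction on the number of nodes: it first normalizes the top level in one batch, pushing every node of $\children(t)\setminus\children(s)$ below its unique depth-one $s$-ancestor, and then recurses into the immediate subtrees $t_k|_y\preceq s|_y$. You instead induct on the potential gap $\Phi(s)-\Phi(t)$ with $\Phi(t)=\sum_z\size(t,z)$ and manufacture a single legal push at a time, locating the required sibling pair by a minimal-counterexample argument (a discrepant node of least $s$-depth, whose $t$-parent's $s$-child on the path down to it must agree in both trees). Both arguments are sound; I checked the delicate points you flagged ($y\neq x$ and $y\neq\root$ follow from $q\prec_s\parent_s(x)\prec_s x$), as well as the base case, where $\Phi(t)=\Phi(s)$ together with $\size(t,z)\le\size(s,z)$ forces equality of descendant sets and hence $t=s$. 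The paper's recursion is arguably shorter because it reuses the top-level decomposition that reappears verbatim in the proof of Theorem~\ref{thm-uf-push}; your potential argument has the advantage of delivering, as a byproduct, the quantitative bound that at most $\Phi(s)-\Phi(t)\le n^2$ pushes are ever needed --- precisely the bound the paper has to argue separately when establishing membership in $\NP$ at the start of the Complexity section --- and it also re-proves antisymmetry of $\preceq$ rather than taking it for granted.
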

\begin{proof}	
For $\vdash^*$ implying $\preceq$ it suffices to show that $\vdash$ implies $\preceq$ since the latter is a partial order.
So let $t=(V,r,\parent)$, and $x\neq y\in V$ be siblings in $t$  with the common parent $z$, and $s=\push(t,x,y)$.
Then, since $\parent_t(x)=z=\parent_s(y)=\parent_s(\parent_s(x))$, we get $\parent_t(x)\preceq_s x$, and by
$\parent_t(w)=\parent_s(w)$ for each node $w\neq x$, we have $t\preceq s$.

It is clear that $\preceq$ is equality on singleton trees, thus $\preceq$ implies $\vdash^*$ for trees of size $1$.
We apply induction on the size of $t=(V,r,\parent)$ to show that whenever $t\preceq s=(V,r,\parent')$, then
$t\vdash^* s$ as well. Let $X$ stand for the set $\children(t)$ of depth-one nodes of $t$ and $Y$ stand for $\children(s)$.
Clearly, $Y\subseteq X$ since by $t\preceq s$, any node $x$ of $t$ having depth at least two has to satisfy $\parent(x)\preceq_s\parent'(x)$
and since $\parent(x)\neq r$ for such nodes, $x$ has to have depth at least two in $s$ as well.
Let $\{x_1,\ldots,x_k\}$ stand for $X-Y$.
Now for any node $x_i\in X-Y$ there exists a unique node $y_i\in Y$ such that $y_i\preceq_s x_i$.
Let us define the trees $t_0=t$, $t_i=\push(t_{i-1},x_i,y_i)$.
Then $t\vdash^*t_k$, $\children(t_k)=Y=\children(s)$ and for each $y\in Y$, $t_k|_y\preceq s|_y$. Applying the induction hypothesis
we get that $t_k|_y\vdash^*s|_y$ for each $y\in Y$, hence the immediate subtrees of $t_k$ can be transformed into the immediate subtrees of $s$
by repeatedly applying $\push$ operations, hence $t\vdash^* s$ as well.
\end{proof}

The relations $\preceq$ and $\vdash^*$ are introduced due to their intimate relation to Union-Find and Union trees:
\begin{theorem}
\label{thm-uf-push}
A tree $t$ is a Union-Find tree if and only if $t\vdash^* s$ for some Union tree $s$.
\end{theorem}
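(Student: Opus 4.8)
The plan is to prove the two implications separately, throughout identifying $\vdash^*$ with $\preceq$ via Proposition~\ref{prop-push-is-lift}, so that the statement reads: $t$ is a Union-Find tree iff $t\preceq s$ for some Union tree $s$.

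For the forward implication (every Union-Find tree admits such an $s$) I would induct on the construction of $t$ as a Union-Find tree. Singletons are Union trees, so $s=t$ works. If $t=\merge(t_1,t_2)$ with $\size(t_1)\ge\size(t_2)$, the induction hypothesis yields Union trees $s_1,s_2$ with $t_i\preceq s_i$; since pushes preserve total size, $\size(s_i)=\size(t_i)$, so $\merge(s_1,s_2)$ is again a Union tree, and one checks directly that $\merge(t_1,t_2)\preceq\merge(s_1,s_2)$. The only case that uses path compression is $t=\compress(t',x)$: here I would first record the elementary fact that collapsing flattens, i.e. $\compress(t',x)\preceq t'$ (the ancestors of a node in the collapsed tree form a subset of its ancestors in $t'$), and then combine it with the induction hypothesis $t'\preceq s$ and transitivity of $\preceq$ to obtain $\compress(t',x)\preceq t'\preceq s$ with the same Union tree $s$. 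This direction is routine.

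The substance is the converse: if $s$ is a Union tree and $t\preceq s$, then $t$ is a Union-Find tree. I would prove this by induction on $\size(t)=\size(s)$, reusing the explicit top-down sequence of pushes produced in the proof of Proposition~\ref{prop-push-is-lift}. Writing $Y=\children(s)\subseteq X=\children(t)$, that proof first pushes every extra root child $x_i\in X\setminus Y$ underneath its unique root-child ancestor $y_i\in Y$, reaching an intermediate tree $t_k$ with $\children(t_k)=Y$, with $\size(t_k,y)=\size(s,y)$ for all $y\in Y$, and with $t_k|_y\preceq s|_y$; it then recurses inside each $s|_y$. To build $t$ by merge/collapse I run this backwards. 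First, since each $s|_y$ is itself a Union tree (the Union condition of Theorem~\ref{thm-union} is local, so subtrees of Union trees are Union trees) and $t_k|_y\preceq s|_y$ with $\size(s|_y)<\size(s)$, the induction hypothesis gives that each $t_k|_y$ is a Union-Find tree; I build these standalone. Next I merge them onto $r$ in increasing order of size: because $\size(t_k,y)=\size(s,y)$ and $s$ satisfies the Union condition at its root, the partial sums dominate the next subtree at every step, so each attachment is a legal union-by-size merge and the result is exactly $t_k$. Finally I recover $t$ from $t_k$ by applying $\compress(\cdot,x_i)$ for each $x_i\in X\setminus Y$.

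The crux --- and the step I expect to be the main obstacle --- is this last collapse phase, because in general a single collapse pulls an \emph{entire} root-to-node path up to the root and so over-flattens. The decomposition above is chosen precisely to sidestep this: in $t_k$ each $x_i$ sits at depth exactly two, as a child of $y_i$ which is a child of $r$, so the path collapsed by $\compress(t_k,x_i)$ has length two and its only effect is to move $x_i$ (together with its unchanged subtree $t|_{x_i}$) up to become a child of $r$, leaving $y_i$ in place. Thus each required ``pull'' is realized by one genuine collapse with no side effects; the $x_i$ are pairwise non-nested (they are siblings in $t$), so the collapses do not interfere, and after performing all of them the root acquires exactly the child set $X$ while every $y$ is stripped of the subtrees that were pulled out, yielding $t$. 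I would double-check the two bookkeeping points on which everything hinges: that $\size(t_k,y)=\size(s,y)$, so that the merges are union-by-size legal, and that collapsing at a depth-two node promotes nothing but that node.
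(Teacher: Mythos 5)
Your proposal is correct and follows essentially the same route as the paper's proof: structural induction for the forward direction (with $\compress(t',x)\preceq t'$ handling path compression), and for the converse the same decomposition into root-level pushes yielding $t_k$, recursion on the subtrees $t_k|_y\preceq s|_y$, re-merging in a size-legal order dictated by the Union condition at $\root_s$, and recovering $t$ by one collapse per node of $X\setminus Y$ at depth two. You even make explicit two bookkeeping facts the paper leaves implicit ($\size(t_k,y)=\size(s,y)$ and that a depth-two collapse promotes only the collapsed node), so no gaps remain.
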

\begin{proof}

Let $t$ be a Union-Find tree. We show the claim by structural induction.
For singleton trees the claim holds since any singleton tree is a Union tree as well. 
Suppose $t=\merge(t_1,t_2)$. Then by the induction hypothesis, $t_1\vdash^* s_1$ and
$t_2\vdash^* s_2$ for the Union trees $s_1$ and $s_2$. Then, for the tree $s=\merge(s_1,s_2)$ we get that $t\vdash^* s$.
Finally, assume $t=\compress(t',x)$ for some node $x$. Let $x=x_1\succ x_2\succ\ldots\succ x_k=\root_{t'}$ be the ancestral sequence of $x$ in $t'$. Then, defining $t_0=t$, $t_i=\push(t_{i-1},x_i,x_{i+1})$ we get that $t\vdash^* t_{k-1}=t'$ and $t'\vdash^* s$ for some Union tree applying the induction hypothesis, thus $t\vdash^* s$ also holds.

Now assume $t\vdash^* s$ (equivalently, $t\preceq s$ by Proposition~\ref{prop-push-is-lift}) for some Union tree $s$.
Let $X$ stand for the set $\children(t)$ of depth-one nodes of $t$ and $Y$ stand for $\children(s)$.
By $t\preceq s$ we get that $Y\subseteq X$. Let $\{x_1,\ldots,x_k\}$ stand for the set $X-Y$.
Then for each $x_i$ there exists a unique $y_i\in Y$ with $y_i\preceq_s x_i$.
Let us define the sequence $t=t_0\vdash t_1\vdash\ldots\vdash t_k$ with $t_i=\push(t_{i-1},x_i,y_i)$. Then, $t_k\preceq s$ holds.
Moreover, as $\children(t_k)=\children(s)=Y$, we get that $t_k|_y\preceq s|_y$ for each $y\in Y$.
Applying the induction hypothesis we have that each subtree $t_k|_y$ is a Union-Find tree.
Now let us define the sequence $t_0',t_1',\ldots,t_k'$ of trees as follows:
$t_0'$ is the singleton tree with root $\root_t$, and $t_i'=\merge(t'_{i-1},t_k|_{y'_i})$
where $Y=\{y'_1,\ldots,y'_\ell\}$ is an enumeration of the members of $Y$ such that $\size(s,y'_i)\leq 1+\mathop\sum\limits_{j<i}\size(s,y'_j)$
for each $i=1,\ldots,\ell$. (Such an ordering exists as $s$ is a Union tree.)
Since $\size(t'_{i-1})=1+\mathop\sum\limits_{j<i}\size(s,y'_j)$, we get thet each $t'_i$ is a Union-Find tree as well.

Finally, the tree $t$ results from the tree $t'_k$ constructed above
by applying successively one $\compress$ operation on each node in $X-Y$, hence
$t$ is a Union-Find tree as well.

\end{proof}

\section{Complexity}
In this section we show that the recognition of Union-Find trees is $\mathbf{NP}$-complete.
Note that membership in $\mathbf{NP}$ is clear by Theorem~\ref{thm-uf-push} and that
the possible number of pushes is bounded above by $n^2$:
upon pushing $x$ below $y$, the size of $y$ increases, while the size of the
other nodes remains the same. Since the size of any node is at most $n$, the
sum of the sizes of all the nodes is at most $n^2$ in any tree.

Let $H>0$ be a parameter (which we call the ``heaviness threshold parameter'' later on)
to be specified later and $t=(V,\root,\parent)$ be a tree. We call
a node $x$ \emph{light} if $\size(t,x)\leq H$, \emph{heavy} if $\size(t,x)>H$, and a \emph{basket} if it has a heavy child in $t$.
Note that every basket is heavy. A particular type of basket is an \emph{empty basket} which is a node having a single child of size $H+1$.
(Thus empty baskets have size $H+2$.)

Let us call a tree $t_0$ \emph{flat} if it satisfies all the following conditions:
\begin{enumerate}
\item There are $K>0$ depth-one nodes of $t_0$ which are empty baskets.
\item There is exactly one non-basket heavy depth-one node of $t_0$, with size $H+1$.
\item The total size of light depth-one nodes is $(K+1)\cdot H$. That is, $\sumsize(t_0,\root_{t_0},H)=(K+1)\cdot H$.
\item The light nodes and non-basket heavy nodes have only direct children as descendants,
	thus in particular, the subtrees rooted at these nodes are Union trees.
\end{enumerate}
See Figure~\ref{fig-flat-tree}.

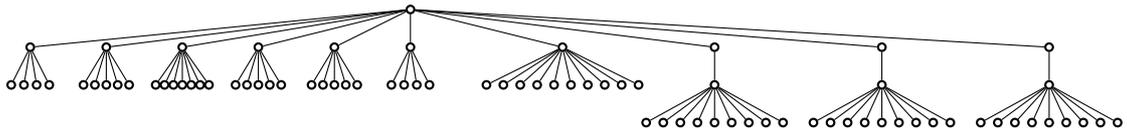
\begin{figure*}[h!]
\centering
\begin{tikzpicture}[every node/.style={thick,circle,inner sep=0pt,minimum size=0.1cm}]
\node (root) at (5,0) [draw] {};
\drawweight{0}{-0.5}{4}{0.5}{-1}{(root)}
\drawweight{1}{-0.5}{5}{0.6}{-1}{(root)}
\drawweight{2}{-0.5}{7}{0.7}{-1}{(root)}
\drawweight{3}{-0.5}{5}{0.6}{-1}{(root)}
\drawweight{4}{-0.5}{5}{0.6}{-1}{(root)}
\drawweight{5}{-0.5}{4}{0.5}{-1}{(root)}
\drawweight{7}{-0.5}{10}{2}{-1}{(root)}
{\node (r1) at (9,-0.5) [draw] {}; \path (r1) edge (root); \drawweight{9}{-1}{9}{1.8}{-1.5}{(r1)} }
{\node (r1) at (11.2,-0.5) [draw] {}; \path (r1) edge (root); \drawweight{11.2}{-1}{9}{1.8}{-1.5}{(r1)} }
{\node (r1) at (13.4,-0.5) [draw] {}; \path (r1) edge (root); \drawweight{13.4}{-1}{9}{1.8}{-1.5}{(r1)} }
\end{tikzpicture}
\caption{A flat tree with $H=9$, $K=3$. The six depth-one light nodes' sizes sum up to $4\times 9=36$.}
\label{fig-flat-tree}
\end{figure*}
The fact that the $\push$ operation cannot decrease the size of a node has a useful corollary:
\begin{proposition}
If $t\vdash^* t'$, and $x$ is a heavy (basket, resp.) node of $t$, then
$x$ is a heavy (basket, resp.) node of $t'$ as well.
Consequently, if $x$ is light in $t'$, then $x$ is light in $t$ as well.
\end{proposition}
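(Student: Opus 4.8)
The plan is to reduce everything to the single fact, already observed just before Proposition~\ref{prop-push-is-lift}, that a $\push$ never decreases the size of any node. First I would record the monotonicity of size along $\vdash^*$: since a single step $t_0\vdash t_1$ satisfies $\size(t_1,w)\geq\size(t_0,w)$ for every node $w$ (the node pushed onto strictly increases, all others stay equal), transitivity gives $\size(t',w)\geq\size(t,w)$ for every $w$ whenever $t\vdash^*t'$.

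The heavy case is then immediate: if $\size(t,x)>H$ then $\size(t',x)\geq\size(t,x)>H$, so $x$ is heavy in $t'$. The final sentence of the statement is just the contrapositive applied to this same inequality: if $\size(t',x)\leq H$ then $\size(t,x)\leq\size(t',x)\leq H$, so $x$ is light in $t$.

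For the basket case the subtlety is that the particular heavy child witnessing that $x$ is a basket in $t$ may itself be pushed away from $x$, so it need not remain a child of $x$ in $t'$. To get around this I would invoke Proposition~\ref{prop-push-is-lift} to pass to the order $t\preceq t'$, which preserves ancestry. If $c$ is a heavy child of $x$ in $t$, then $x\preceq_t c$, hence $x\preceq_{t'}c$; I would then let $c'$ be the child of $x$ lying on the path from $x$ to $c$ in $t'$ (so $c'\preceq_{t'}c$, possibly $c'=c$). Since $c'$ is an ancestor of $c$ in $t'$ and $c$ is heavy in $t$, monotonicity gives $\size(t',c')\geq\size(t',c)\geq\size(t,c)>H$, so $c'$ is a heavy child of $x$ in $t'$ and $x$ is a basket there.

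I expect the basket case to be the only real obstacle; the heavy and light cases are pure bookkeeping about sizes. The key insight that makes the basket case go through is precisely that ancestry is monotone under $\vdash^*$ (Proposition~\ref{prop-push-is-lift}), which lets me relocate the heavy-child witness to whichever child of $x$ now sits above it in $t'$.
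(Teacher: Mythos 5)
Your proposal is correct and follows essentially the same route as the paper: size monotonicity under $\vdash^*$ handles the heavy and light cases, and for the basket case you pass to $\preceq$ via Proposition~\ref{prop-push-is-lift} to relocate the heavy-child witness to the child of $x$ that is its ancestor in $t'$, exactly as the paper does. No gaps.
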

\begin{proof}
Retaining heaviness simply comes from $\size(t,x)\leq\size(t',x)$.
When $x$ is a basket node of $t$, say having a heavy child $y$, then by $x\preceq_t y$
and $t\vdash^* t'$ (that is, $t\preceq t'$) we get $x\preceq_{t'} y$ as well, hence $x$ has a (unique) child $z$ which is an ancestor of $y$
in $t'$ (alliwing $y=z$), thus by $\size(t',z)\geq\size(t',y)\geq\size(t,y)$ we get that $z$ is heavy in $t'$ hence $x$ is a basket of $t'$ as well.
\end{proof}

We will frequently sum the sizes of the light children of nodes, so for a tree $t$
and node $x$ of $t$, let $\sumlight(t,x)$ stand for $\sumsize(t,x,H)$, the total
size of the light children of $x$ in $t$.

We introduce a charge function $c$ as follows: for a tree $t$ and node $x\in V_t$, let
the \emph{charge} of $x$ in $t$, denoted $c(t,x)$, be
\[c(t,x)=\begin{cases}0&\hbox{if }x\hbox{ is a light node of }t,\\\sumlight(t,x)-H&\hbox{ otherwise}.\end{cases}\]
It's worth observing that when $x$ is a non-basket heavy node of $t$, then $\sumlight(t,x)=\size(t,x)-1$ (since by being non-basket,
each child of $x$ is light).

Note that in particular the charge of a node is computable from the multiset of the sizes of its children
(or, from the multiset of the sizes of its \emph{light} children along with its own size).
Let $t'=\push(t,x,y)$ and let $z=\parent_t(y)$ be the common parent of $x$ and $y$ in $t$. Then, $c(t,w)=c(t',w)$ for each node $w\notin\{y,z\}$ since for those nodes this multiset does not change.
We now list how $c(t',y)$ and $c(t',z)$ vary depending on whether $y$ and $x$ are light or heavy nodes. Note that $\size(t',z)=\size(t,z)$
(thus the charge of $z$ can differ in $t$ and $t'$ only if $z$ is heavy and $\sumlight(t',z)\neq\sumlight(t,z)$) and
$\size(t',y)=\size(t,x)+\size(t,y)$.
\begin{enumerate}
\item[i)] If $x$ and $y$ are both light nodes with $\size(t,x)+\size(t,y)\leq H$
  (i.e. $y$ remains light in $t'$ after $x$ is pushed below $y$) then
  $\sumlight(t',z)=\sumlight(t,z)$ hence $c(t,z)=c(t',z)$ and $c(t',y)=c(t,y)=0$
  since $y$ is still light in $t'$.
\item[ii)] If $x$ and $y$ are both light nodes with $\size(t,x)+\size(t,y)>H$
  (i.e. $y$ becomes heavy due to the pushing), then $z$ is heavy as well (both
  in $t$ and in $t'$ since $\size(t,z)=\size(t',z)$). Then since $\sumlight(t',z)=\sumlight(t,z)-(\size(t,x)+\size(t,y))$, we get $c(t',z)=c(t,z)-(\size(t,x)+\size(t,y))$
  and $c(t',y)=\size(t,y)+\size(t,x)-(H+1)$ since $y$ is heavy in $t'$ having only light
  children (by the assumption, $x$ is still light and each child of $y$ already present in
  $t$ have to be also light since $y$ itself is light in $t$).
\item[iii)] If $x$ is heavy and $y$ is in light in $t$, then after pushing, $y$ becomes heavy as well (and $z$ is also heavy). Then, by $\sumlight(t',z)=\sumlight(t,z)-\size(t,y)$ (since the child $y$ of $z$ loses its status of being light) we have $c(t',z)=c(t,z)-\size(t,y)$ and in $t'$, $y$ is a heavy node with light children of total size $\size(t,y)-1$, hence $c(t',y)=\size(t,y)-(H+1)$ (while $c(t,y)=0$ since $y$ is light in $t$).
\item[iv)] If $x$ is light and $y$ is heavy in $t$, then $z$ is heavy as well,
 $z$ loses its light child $x$ while $y$ gains the very same light child, hence
 $c(t',z)=c(t,z)-\size(t,x)$ and $c(t',y)=c(t,y)+\size(t,x)$.
\item[v)] Finally, if both $x$ and $y$ are heavy, then $z$ is heavy as well, and
  since neither $z$ nor $y$ loses or gains a light child, $c(t',z)=c(t,z)$ and
  $c(t',y)=c(t,y)$.
\end{enumerate}
Observe that in Cases i), iv) and v) we have $\mathop\sum\limits_{x\in V_t}c(t,x)=\mathop\sum\limits_{x\in V_t}c(t',x)$ while in Cases ii) and iii) it holds that $\mathop\sum\limits_{x\in V_t}c(t,x)>\mathop\sum\limits_{x\in V_t}c(t',x)$ (namely, the total charge decreases by $H+1$ in these two cases).

Now for a flat tree $t_0$ it is easy to check that $\mathop\sum\limits_{x\in V_{t_0}}c(t_0,x)$ is zero:
we have $\sumlight(t_0,\root_{t_0})=(K+1)\cdot H$ by assumption, hence
the root node is in particular heavy and $c(t_0,\root_{t_0})=K\cdot H$,
the empty baskets (having no light node at all) have charge $-H$ each and there are $K$ of them, finally,
all the heavy nodes of size $H+1$ have charge zero as well as each light node, making up
the total sum of charges to be zero.

The charge function we introduced turns out to be particularly useful due to the following fact:
\begin{proposition}
If $t$ is a Union tree, then $c(t,x)\geq 0$ for each node of $t$.
\end{proposition}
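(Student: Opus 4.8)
The plan is to reduce everything to the characterization of Union trees in Theorem~\ref{thm-union}, invoked at a carefully chosen child. Since $c(t,x)=0$ for every light node $x$, only heavy nodes require attention, and for a heavy node $x$ the inequality $c(t,x)\geq 0$ is by definition equivalent to $\sumlight(t,x)\geq H$. So the whole proposition amounts to showing that every heavy node of a Union tree has light children of total size at least $H$.

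First I would dispose of the case where the heavy node $x$ is not a basket, i.e. all of its children are light. Then $\sumlight(t,x)$ is simply the sum of the sizes of all children of $x$, which equals $\size(t,x)-1$; as $x$ is heavy we have $\size(t,x)\geq H+1$, so $\sumlight(t,x)\geq H$ and $c(t,x)\geq 0$. This part does not even use the Union condition, only heaviness.

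For the remaining case, let $x$ be a heavy basket and let $x_1,\dots,x_k$ be the enumeration of $\children(t,x)$ sorted by non-decreasing size, as in the equivalent formulation of the Union condition stated after Theorem~\ref{thm-union}. The key observation is that, since every light child has size at most $H$ and every heavy child has size at least $H+1$, all light children precede all heavy children in this ordering. Hence, if $x_i$ denotes the lightest heavy child of $x$ (which exists because $x$ is a basket), then $x_1,\dots,x_{i-1}$ are exactly the light children of $x$, so $\sum_{j<i}\size(t,x_j)=\sumlight(t,x)$. Applying the Union condition at index $i$ yields $\sumlight(t,x)=\sum_{j<i}\size(t,x_j)\geq\size(t,x_i)-1\geq(H+1)-1=H$, whence $c(t,x)=\sumlight(t,x)-H\geq 0$.

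The one subtlety — and the only place where care is needed — is the choice of the child at which to invoke the Union condition. The condition bounds $\sumsize(t,x,W)$ from below by $W$ with $W=\size(t,y)-1$, but for a heavy child $y$ this threshold $W\geq H$ also admits heavy children into the sum, so invoking it at an arbitrary heavy child would bound the wrong quantity (a partial sum possibly including heavy children) rather than $\sumlight(t,x)=\sumsize(t,x,H)$. Choosing $y$ to be the \emph{lightest} heavy child is precisely what forces the partial sum on the left to count exactly the light children, so that it coincides with $\sumlight(t,x)$; once this alignment is seen, the inequality is immediate.
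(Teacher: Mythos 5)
Your proof is correct and follows essentially the same route as the paper's: dispose of light nodes and non-basket heavy nodes directly, then invoke the Union condition at the \emph{lightest} heavy child of a basket, observing that the relevant partial sum is exactly $\sumlight(t,x)$ and is bounded below by $\size(t,y)-1\geq H$. The only cosmetic difference is that you phrase the Union condition via the sorted enumeration rather than via $\sumsize(t,x,\size(t,y)-1)$; the substance is identical.
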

\begin{proof}
If $x$ is a light node of $t$, then $c(t,x)=0$ and we are done.

If $x$ is a heavy node which is \emph{not} a basket, then all of its children are light nodes, thus $\sumlight(t,x)=\size(t,x)-1$, making $c(t,x)=\size(t,x)-(H+1)$ which is nonnegative since $\size(t,x)>H$ by $x$ being heavy.

Finally, if $x$ is a basket node of $t$, then it has at least one heavy child.
Let $y$ be a lightest heavy child of $x$. Since $t$ is a Union tree, we have
$\sumsize(t,x,\size(t,y)-1)\geq \size(t,y)-1$. By the choice of $y$, every child of
$x$ which is lighter than $y$ is light itself, thus $\sumsize(t,x,\size(t,y)-1)=\sumsize(t,x,H)=\sumlight(t,x)$,
moreover, $\size(t,y)-1\geq H$ since $y$ is heavy,
thus $\sumlight(t,x)\geq H$ yielding $c(t,x)=\sumlight(t,x)-H\geq 0$ and the statement is proved.
\end{proof}

Thus, since a flat tree has total charge $0$ while in a Union tree each node has nonnegative charge, and
the push operation either decreases the total charge in Cases ii) and iii) above, or leaves the total charge unchanged in the other cases,
we get the following:
\begin{lemma}
Suppose $t_0\vdash^*t$ for a flat tree $t_0$ and a Union tree $t$.
Then for any sequence of push operations transforming $t_0$ into $t$,
each push operation is of type i), iv) or v) above, i.e.
\begin{itemize}
\item either a light node is pushed into a light node, yielding still a light node;
\item or some node (either heavy or light) is pushed into a heavy node.
\end{itemize}
Moreover, the charge of each node of $t$ has to be zero.
\end{lemma}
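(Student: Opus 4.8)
The plan is to use the total charge $\sum_{x} c(\cdot,x)$ as a monotone potential along the given sequence of pushes. First I would fix an arbitrary sequence $t_0 = s_0 \vdash s_1 \vdash \cdots \vdash s_m = t$ witnessing $t_0 \vdash^* t$, and recall the two facts already established: the total charge of the flat tree $t_0$ equals $0$, and by the preceding Proposition every node of the Union tree $t$ has nonnegative charge, so $\sum_{x\in V_t} c(t,x) \geq 0$.

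Next I would invoke the per-push case analysis already carried out. Each step $s_{i-1} \vdash s_i$ is of one of the types i)--v); in Cases i), iv) and v) the total charge is unchanged, while in Cases ii) and iii) it strictly decreases by exactly $H+1$. Hence, writing $p$ for the number of steps of type ii) or iii) occurring in the chosen sequence, the total charge of $t$ equals $0 - (H+1)\cdot p$.

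Combining this with the lower bound $\sum_{x} c(t,x) \geq 0$ forces $(H+1)\cdot p \leq 0$; since $H>0$ and $p\geq 0$, this gives $p=0$. Therefore no push in the sequence is of type ii) or iii), that is, every push is of type i), iv) or v), which is precisely the dichotomy asserted in the statement (a light node pushed into a light node that remains light, or an arbitrary node pushed into a heavy node). Moreover $p=0$ yields $\sum_{x} c(t,x)=0$; as this is a sum of terms each of which is nonnegative because $t$ is a Union tree, every term must vanish, so $c(t,x)=0$ for each node $x$ of $t$.

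As for the main obstacle, the heavy lifting---the per-push charge bookkeeping and the nonnegativity of charges in Union trees---is already in place, so what remains is essentially a conservation argument. The one point meriting care is that the conclusion must hold for \emph{any} sequence realizing $t_0 \vdash^* t$; this is automatic, since only the endpoint charges $\sum_{x} c(t_0,x)=0$ and $\sum_{x} c(t,x)\geq 0$ enter the inequality, while the per-step monotonicity holds for every push regardless of the route taken. In particular I would not need the intermediate trees $s_i$ to be Union trees (their charges may well be negative): nonnegativity of charges is used only at the endpoint $t$.
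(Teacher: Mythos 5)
Your proposal is correct and is essentially the paper's own argument: the paper derives this lemma directly from the preceding observations that a flat tree has total charge $0$, that every node of a Union tree has nonnegative charge, and that each push either preserves the total charge (Cases i, iv, v) or strictly decreases it by $H+1$ (Cases ii, iii). Your write-up merely makes the conservation/telescoping bookkeeping explicit, including the correct observation that nonnegativity is needed only at the final tree $t$.
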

In particular, a heavy node $x$ has charge $0$ in a tree $t$ if and only if $\sumlight(t,x)=H$ yielding also that each non-basket heavy node of $t$
has to be of size exactly $H+1$.
It's also worth observing that by applying the above three operations one does not \emph{create} a new heavy node: light nodes remain light
all the time.
Moreover, if for a basket node $\sumlight(t,x)=H$ in a Union tree $t$, then there has to exist a heavy child of $x$ in $t$
of size exactly $H+1$ (by $x$ being a basket node, there exists a heavy child and if the size of the lightest heavy child
is more than $H+1$, then the Union condition gets violated).

Recall that in a flat tree $t_0$ with $K$ empty baskets, there are $K+1$ baskets in total (the depth-one empty baskets and the root node),
and there are $K+1$ non-basket heavy nodes (one in each basket, initially), each having size $H+1$, and all the other nodes are light.

Thus if $t_0\vdash^*t$ for some Union tree $t$, then the set of non-basket heavy nodes coincide in $t_0$ and in $t$,
and also in $t$, the size of each such node $x$ is still $H+1$. In particular, one cannot increase the size of $x$
by pushing anything into $x$.

Summing up the result of the above reasoning we have:
\begin{lemma}
Suppose $t_0\vdash^*t$ for some flat tree $t_0$ and some Union tree $t$. Then for any pushing sequence transforming $t_0$ into $t$,
each step $t_i\vdash t_{i+1}$ of the push chain with $t_{i+1}=\push(t_i,x,y)$ has to be one of the following two forms:
\begin{itemize}
\item $x$ and $y$ are light in $t_i$ and in $t_{i+1}$ as well.
\item $y$ is a basket in $t_i$ (and in $t_{i+1}$ as well).
\end{itemize}
\end{lemma}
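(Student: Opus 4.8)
The plan is to refine the classification delivered by the previous lemma rather than to redo the charge bookkeeping. That lemma already guarantees that every step $t_{i+1}=\push(t_i,x,y)$ is of type i), iv) or v), so I only need to reinterpret these three cases in the language of the present statement. Type i) — a light node pushed below a light node that stays light — is literally the first of the two claimed forms, with $x$ light in $t_i$ and in $t_{i+1}$ (its own size is unchanged) and $y$ light in both (by the defining inequality of that case); so there nothing remains to prove. In types iv) and v) the node $y$ is heavy in $t_i$, and the entire task reduces to showing that such a $y$ must in fact be a basket.

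First I would collect the monotonicity facts I intend to lean on: along any push chain sizes never decrease, and by the earlier observation that no light node ever becomes heavy, the set of heavy nodes is the same in $t_0$, in every $t_i$, and in $t$; moreover, by the Proposition stating that $\push$ preserves heaviness and basket-status, being a basket can only be acquired, never lost, as we move forward along $\vdash$. Equivalently, if a node is \emph{not} a basket in some $t_i$, then it is not a basket in any earlier tree either, in particular in $t_0$.

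Next I would argue by contradiction. Suppose a step of type iv) or v) pushes $x$ into a node $y$ that is heavy but \emph{not} a basket in $t_i$. By the forward-monotonicity of basket-status, $y$ is then a non-basket heavy node of $t_0$, hence one of the $K+1$ non-basket heavy nodes of the flat tree, so $\size(t_0,y)=H+1$. Invoking the fact recalled just before the lemma — that the set of non-basket heavy nodes coincides in $t_0$ and $t$, each such node still having size $H+1$ in $t$ — I obtain $\size(t,y)=H+1$ as well, and monotonicity of size squeezes $\size(t_i,y)=H+1$. But the push raises $y$'s size to $\size(t_{i+1},y)=H+1+\size(t_i,x)>H+1$, which by monotonicity forces $\size(t,y)>H+1$, contradicting $\size(t,y)=H+1$. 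Therefore $y$ is a basket in $t_i$, and by the same forward-monotonicity it remains a basket in $t_{i+1}$, which is exactly the second claimed form.

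The only genuinely delicate point is the squeeze step, which hinges on knowing that the non-basket heavy nodes are the \emph{same} nodes in $t_0$ and in $t$ and are pinned to size $H+1$ at both ends. That identification is precisely the content of the discussion immediately preceding the lemma (itself a consequence of the charge being zero at every node of the Union tree $t$), so I would cite it rather than re-derive it; once it is in hand, the contradiction via the forbidden size increase is immediate.
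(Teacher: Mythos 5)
Your proposal is correct and follows essentially the same route as the paper: the paper states this lemma as a summary of the preceding discussion (the type-i/iv/v classification, the invariance of the heavy and basket sets, and the fact that non-basket heavy nodes are pinned to size $H+1$ in both $t_0$ and $t$, so nothing can be pushed into them), and your write-up is a careful formalization of exactly that argument. The contradiction via the forbidden size increase is precisely the paper's remark that ``one cannot increase the size of $x$ by pushing anything into $x$.''
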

We can even restrict the order in which the above operations are applied.
\begin{lemma}
\label{lemma-baskets-first}
If $t_0\vdash^*t$ for some flat tree $t_0$ and Union tree $t$, then there is a push sequence
$t_0\vdash t_1\vdash\ldots\vdash t_k=t$ with $t_{i+1}=\push(t_i,x_i,y_i)$ and an index $\ell$ such that for each $i\leq\ell$, $y_i$ is a basket node
and for each $i>\ell$, both $x_i$ and $y_i$ are light nodes with $\size(t_i,x_i)+\size(t_i,y_i)\leq H$.
\end{lemma}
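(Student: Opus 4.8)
The plan is to prove the lemma by a local rewriting (exchange) argument on push sequences. By hypothesis some sequence $t_0\vdash t_1\vdash\cdots\vdash t_k=t$ exists, and by the preceding lemma every one of its steps is of exactly one of two kinds: a \emph{light step} $\push(t_i,x,y)$ with $x,y$ light in $t_i$ and in $t_{i+1}$ and $\size(t_i,x)+\size(t_i,y)\le H$ (call it type~A), or a \emph{basket step} $\push(t_i,x,y)$ with $y$ a basket (type~B). The goal is then purely combinatorial: rewrite the sequence into one of the form $B^{*}A^{*}$, i.e.\ with every basket step preceding every light step, so that the promised index $\ell$ can be read off. I would achieve this by repeatedly locating an adjacent pair consisting of a light step immediately followed by a basket step and replacing it by an equivalent block in which basket steps come first.

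The heart of the argument is a commutation analysis of such an adjacent pair, say a light step $\push(t_i,a,b)$ followed by a basket step $\push(t_{i+1},c,d)$ with $d$ a basket. First I would note that $d$ is heavy while $a,b$ are light, so $d\notin\{a,b\}$. Next, $c=a$ is impossible: it would force $d$ to be a child of $b$ in $t_{i+1}$, hence (as $d\neq a$) already a child of $b$ in $t_i$, making $d$ light since $b$ is light, contradicting that $d$ is a basket. If $c\notin\{a,b\}$, then neither $c$ nor $d$ had its parent changed by the light step, so $c,d$ are already siblings in $t_i$ and the two steps touch disjoint parts of the tree; they simply commute, $\push(t_i,a,b)\,\push(\cdot,c,d)\rightsquigarrow\push(t_i,c,d)\,\push(\cdot,a,b)$, yielding the same tree. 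The only genuine interaction is $c=b$, where the basket step moves the very node into which $a$ was just pushed. Here I would use the three-step rewrite $\push(t_i,a,b)\,\push(\cdot,b,d)\rightsquigarrow\push(t_i,b,d)\,\push(\cdot,a,d)\,\push(\cdot,a,b)$, checking that each new step is legal (siblinghood, $d$ still a basket, and $a,b$ still light with unchanged sizes so their combined size is still $\le H$) and that the net tree is unchanged, since both orders leave $a$ a child of $b$ and $b$ a child of $d$.

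For termination, which I expect to be the main obstacle, a naive inversion count will not work: the interacting rewrite \emph{lengthens} the sequence by inserting an extra basket step, so it can increase the number of light-before-basket inversions. Instead I would use the potential $M=\sum_{\text{type-A steps }P}\bigl(\text{number of type-B steps preceding }P\bigr)$ and show it \emph{strictly increases} under every rewrite: a plain swap moves one light step past one basket step and raises $M$ by $1$, while the interacting rewrite moves a light step past two basket steps and inserts a basket step, raising $M$ by at least $2$. The number of type-A steps is invariant under both rewrites, and the total number of steps in any push sequence from $t_0$ to $t$ is bounded, because each push raises the quantity $\sum_{v}\size(s,v)$ (summed over all nodes $v$ of the current tree $s$) by at least $1$, whereas this sum is fixed at $t_0$ and at $t$; hence the number of steps, and in particular the number of type-B steps, is at most a constant $C$ depending only on $t_0$ and $t$. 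Thus $M$ is bounded above, so only finitely many rewrites can occur. A sequence to which no rewrite applies contains no light step immediately followed by a basket step, which is exactly the desired $B^{*}A^{*}$ form, and choosing $\ell$ as the number of leading basket steps completes the proof.
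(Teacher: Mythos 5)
Your proof is correct and follows essentially the same route as the paper's: the same adjacent-pair exchange argument, with a plain commutation when the two pushes are disjoint and the identical three-step rewrite $\push(\cdot,b,d),\push(\cdot,a,d),\push(\cdot,a,b)$ in the interacting case $c=b$. Your write-up is in fact more careful than the paper's on two points it glosses over: you explicitly rule out the subcase $c=a$, and you supply a genuine termination argument (the paper merely asserts that finitely many rewrites suffice, which is not immediate since the interacting rewrite lengthens the sequence), using the potential $M$ together with the fixed upper bound on the length of any push sequence from $t_0$ to $t$.
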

\begin{proof}
Indeed, assume $t_{i+1}=\push(t_i,x,y)$ for the light nodes $x$ and $y$ of $t_i$ with $\size(t_i,x)+\size(t_i,y)\leq H$
and $t_{i+2}=\push(t_{i+1},z,w)$ for the basket $w$ of $t_{i+1}$. Then we can modify the sequence as follows:
\begin{itemize}
\item if $z=y$, then we can get $t_{i+2}$ from $t_i$ by pushing $x$ and $y$ into the basket $w$ first, then pushing $x$ into $y$;
\item otherwise we can simply swap the two push operations since $w$ cannot be either $x$ or $y$ (since light nodes are not baskets),
  nor descendants of $x$ or $y$, thus $z$ and $w$ are already siblings in $t_i$ as well, hence $z$ can be pushed into $w$
  also in $t_i$, and afterwards since $x$ and $y$ are siblings in the resulting tree, $x$ can be pushed into $y$.
\end{itemize}
Applying the above modification finitely many times we arrive to a sequence of trees satisfying the conditions of the Lemma.
Thus if $t_0\vdash^*t$ for some flat tree $t_0$ and Union tree $t$, we can assume that first we push nodes into baskets,
then light nodes into light nodes, yielding light nodes.
\end{proof}
However, it turns out the latter type of pushing cannot fix the Union status of the trees we consider.

\begin{proposition}
\label{prop-light-push-cannot-save-us}
Suppose $t$ is a tree with a basket node $x$ violating the Union condition, i.e. for some child $y$ of $x$
it holds that $\sumsize(t,x,\size(t,y)-1)<\size(t,y)-1$.
Then for any tree $t'=\push(t,z,w)$ with $z$ and $w$ being light nodes with total size at most $H$
we have that $x$ still violates the Union condition in $t'$.
\end{proposition}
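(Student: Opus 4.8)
The plan is to reduce everything to the effect of the push on the multiset of sizes of the children of $x$, since whether $x$ satisfies the Union condition depends only on that multiset. First I would record the local effect of $t'=\push(t,z,w)$: only $\size(w)$ changes, increasing by $\size(z)$, while for every node $v\neq w$ the set of its descendants (hence its size) is unchanged, because the only ancestor $z$ gains in $t'$ is $w$. Since $x$ is a basket it is heavy, whereas $z$ and $w$ are light, so $x\notin\{z,w\}$; I would then split on the location of the common parent $p=\parent_t(z)=\parent_t(w)$.

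If $p\neq x$, then neither $z$ nor $w$ is a child of $x$, and as only $w$'s size changes, the children of $x$ and all their sizes are literally the same in $t$ and $t'$, with $\size(t',y)=\size(t,y)$. The inequality witnessing the violation at $y$ is then copied verbatim, so $x$ still violates at $y$; this case is immediate.

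The substantive case is $p=x$, where $z$ and $w$ are both children of $x$, of sizes $a=\size(t,z)$ and $b=\size(t,w)$. Passing from $t$ to $t'$ replaces these two children by a single child of size $a+b$: writing $M$ for the multiset of sizes of the remaining children of $x$, the relevant multiset changes from $M\cup\{a,b\}$ to $M\cup\{a+b\}$. Thus the proposition reduces to the purely combinatorial claim that merging two children can never repair a violation, i.e. if $M\cup\{a,b\}$ violates the Union condition then so does $M\cup\{a+b\}$. I would prove the contrapositive using the cumulative function $f_S(W)=\sum\{s\in S:s\leq W\}$ (which specializes to $\sumsize$) together with the restatement that $S$ satisfies the Union condition iff $f_S(s-1)\geq s-1$ for every $s\in S$. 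The one-line monotonicity I need is $f_{M\cup\{a,b\}}(W)\geq f_{M\cup\{a+b\}}(W)$ for every $W$: the two sides agree once $W\geq a+b$, and for $W<a+b$ the right side equals $f_M(W)$ while the left is at least $f_M(W)$.

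Given a violating element $s$ of $M\cup\{a,b\}$, the easy subcase is $s\in M$, since $s$ is still present after merging and monotonicity gives $f_{M\cup\{a+b\}}(s-1)\leq f_{M\cup\{a,b\}}(s-1)<s-1$ at once. The hard part, and the only delicate step, is the subcase $s=a$ or $s=b$, because the merging destroys the very witness of the violation. Here I would relocate the witness: take $m^*$ to be the least element of $M\cup\{a+b\}$ with $m^*\geq s$ (it exists because $a+b\geq s$). Every element below $m^*$ then lies in $M$ and is at most $s-1$, so $f_{M\cup\{a+b\}}(m^*-1)=f_M(s-1)\leq f_{M\cup\{a,b\}}(s-1)<s-1\leq m^*-1$, exhibiting a violation at $m^*$. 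This closes the reduction. I would also remark that the hypotheses that $z,w$ are light with $a+b\leq H$ are not actually used in the argument; they only describe the setting in which the proposition is invoked.
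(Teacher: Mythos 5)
Your proof is correct and follows essentially the same route as the paper's: the same case split on whether $z,w$ are children of $x$, the same monotonicity of the cumulative size sum, and your $m^*$ is exactly the paper's choice of a child $y_0$ of $x$ in $t'$ of minimal size among those with $\size(t',y_0)\geq\size(t,y)$. One caveat: your closing remark that the lightness hypotheses are ``not actually used'' contradicts your own second sentence---the lightness of $w$ is precisely what guarantees $w\neq x$ (since $x$ is a basket, hence heavy), and if $w=x$ were permitted then $x$ would gain a new child and the violation could in fact be repaired, so that hypothesis is essential.
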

\begin{proof}
If $z$ (and $w$) are not children of $x$, then $\children(t,x)=\children(t',x)$ (since in particular $w\neq x$ by $w$ being light and $x$ being a basket)
and each child of $x$ also has the same size in $t'$ as in $t$, hence the Union condition is still violated.
Now assume $z$ and $w$ are children of $x$. Upon pushing, $x$ loses a child of size $\size(t,z)$
and a child of size $\size(t,w)$ and gains a child of size $\size(t,z)+\size(t,w)$. It is clear that
$\sumsize(t,x,W)\geq\sumsize(t',x,W)$ for each possible $W$ then (equality holds when $W\geq\size(t,z)+\size(t,w)$
or $W<\min\{\size(t,w),\size(t,z)\}$ and strict inequality holds in all the other cases).
It is also clear that there is at least one child $y'$ of $x$ in $t'$ such that $\size(t,y)\leq\size(t',y')$:
if $y\neq z$ then $y'=y$ suffices, otherwise $y'=w$ is fine. Now let $y_0$ be such a child with $W=\size(t',y_0)$ being
the minimum possible. Then $\sumsize(t',x,W-1)=\sumsize(t',x,\size(t,y)-1)\leq\sumsize(t,x,\size(t,y)-1)<\size(t,y)-1\leq\size(t',y_0)-1$
and thus $x$ still violates the Union condition in $t'$ as well.
\end{proof}
Hence we have:
\begin{proposition}
\label{prop-save-us-lord-almighty-basket-push-pretty-please}
	Suppose $t_0$ is a flat tree. Then, $t_0$ is a Union-Find tree if and only if $t_0\vdash t_1\vdash\ldots\vdash t_k$ for
	some Union tree $t_k$ such that at each step $t_i\vdash t_{i+1}$ of the chain with $t_{i+1}=\push(t_i,x,y)$,
	the node $y$ is a basket node in $t_i$ (and consequently, in all the other members of the sequence as well).
\end{proposition}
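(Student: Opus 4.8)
The plan is to prove the two directions separately, with the reverse (``if'') direction being essentially free and all the work concentrated in the forward (``only if'') direction. For the ``if'' direction, a basket-only push chain $t_0\vdash t_1\vdash\ldots\vdash t_k$ ending in a Union tree $t_k$ is in particular \emph{some} push chain, so $t_0\vdash^* t_k$ with $t_k$ a Union tree, and Theorem~\ref{thm-uf-push} immediately yields that $t_0$ is a Union-Find tree. Hence I would only need to supply the forward direction.

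For the forward direction I would start from the assumption that $t_0$ is a Union-Find tree and apply Theorem~\ref{thm-uf-push} to obtain a Union tree $s$ with $t_0\vdash^* s$. Invoking Lemma~\ref{lemma-baskets-first}, I can choose a push sequence $t_0\vdash t_1\vdash\ldots\vdash t_k=s$ together with an index $\ell$ so that every push among the first $\ell$ targets a basket, and every later push is a light-into-light push of two light nodes of combined size at most $H$. The goal then reduces to a single claim: the intermediate tree $t_\ell$ is \emph{already} a Union tree. Once this is established, the prefix $t_0\vdash t_1\vdash\ldots\vdash t_\ell$ is exactly the basket-only chain demanded by the statement, with $t_\ell$ playing the role of $t_k$.

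To prove that $t_\ell$ is a Union tree I would argue by contradiction: suppose some node $x$ violates the Union condition in $t_\ell$. I would first rule out every non-basket node as the site of the violation. Here I would use that up to $t_\ell$ only basket pushes have been applied, and that a basket push $\push(t_i,u,v)$ has as its common parent a node with the heavy child $v$, hence a basket. Consequently a light node or a non-basket heavy node is never the common parent of a basket push, so it can neither gain nor lose children nor have its subtree rearranged; such nodes therefore retain the plain star shape they have in the flat tree $t_0$ (light nodes and non-basket heavy nodes have only leaf children there) and trivially satisfy the Union condition. Thus the violating node $x$ must be a basket. Now I would apply Proposition~\ref{prop-light-push-cannot-save-us} once for each of the remaining steps $t_\ell\vdash\ldots\vdash t_k$, all of which are light-into-light pushes: each such step preserves the violation at the basket $x$, so $x$ still violates the Union condition in $t_k=s$, contradicting that $s$ is a Union tree.

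The main obstacle I anticipate is the bookkeeping behind the assertion that non-basket nodes keep their star shape under basket pushes. One must verify that no basket push can alter the internal structure of a light or non-basket heavy node: it cannot be the target (that is a basket), it cannot be the common parent (that would force it to have a heavy child and thus be a basket), and when it is merely the pushed node its own subtree is carried along unchanged; likewise no basket push can have its common parent strictly inside the subtree of a non-basket heavy node, since all proper descendants there are light. Pinning down these cases carefully -- rather than the final contradiction, which is a routine iterated application of Proposition~\ref{prop-light-push-cannot-save-us} -- is the delicate part of the argument.
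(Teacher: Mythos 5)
Your proposal is correct and follows essentially the same route as the paper: reorder the push sequence via Lemma~\ref{lemma-baskets-first} so that basket pushes come first, observe that basket pushes can only alter the Union status of basket nodes (so any violation after the basket phase sits at a basket), and then invoke Proposition~\ref{prop-light-push-cannot-save-us} to conclude that the trailing light-into-light pushes cannot repair such a violation, forcing $t_\ell$ to already be a Union tree. Your write-up is in fact more explicit than the paper's about why non-basket nodes retain their star shape, but the underlying argument is identical.
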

\begin{proof}
Observe that initially in any flat tree $t_0$, only basket nodes violate the Union condition.
Moreover, by pushing arbitrary nodes into baskets only the baskets' Union status can change (namely,
upon pushing into some basket $x$, the status of $x$ and its parent can change, which is also a basket).
Thus, after pushing nodes into baskets we either already have a Union tree, or not, but in the latter case
we cannot transform the tree into a Union tree by pushing light nodes into light nodes by Proposition~\ref{prop-light-push-cannot-save-us}.
\end{proof}
Hence we arrive to the following characterization:
\begin{proposition}
\label{prop-final}
Assume $t_0$ is a flat tree having $K$ empty baskets.
Then, $t_0$ is a Union-Find tree if and only if the set $\mathcal{L}$ of its depth-one light nodes
can be partitioned into sets $L_1,\ldots,L_{K+1}$ such that for each $1\leq i\leq K+1$,
$\sum_{x\in L_i}\size(t_0,x)=H$ and for each $y\in L_i$, $\sum_{z\in L_i,\size(t_0,z)<\size(t_0,y)}\size(t_0,z)\geq \size(t_0,y)-1$.
\end{proposition}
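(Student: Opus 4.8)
The plan is to read this proposition as the translation of the dynamic push process into a static combinatorial condition, leaning entirely on the machinery already assembled. By Proposition~\ref{prop-save-us-lord-almighty-basket-push-pretty-please} I may work exclusively with push sequences in which every target is a basket, and throughout such a sequence the set of baskets is invariant: it consists of the root together with the $K$ depth-one empty baskets, so there are exactly $K+1$ baskets. Recall also that if $t_0\vdash^* t$ with $t$ a Union tree via such pushes, then every node of $t$ has charge $0$ and every non-basket heavy node still has size exactly $H+1$; in particular a basket $x$ of $t$ satisfies $\sumlight(t,x)=H$.

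For the forward direction, suppose $t_0$ is a Union-Find tree and fix a basket-only push sequence $t_0\vdash^* t$ to a Union tree $t$. For each of the $K+1$ baskets $b$ (the root and the empty baskets), let $L_b$ denote the set of its \emph{direct} light children in $t$. I would first argue that these sets partition $\mathcal{L}$: a push never enlarges a node and never creates a heavy node, so the sizes of the depth-one light nodes of $t_0$ are preserved, each such node is moved (if at all) only into baskets and hence settles as a direct light child of exactly one basket, while no other node can become a direct light child of a basket. Charge $0$ at $b$ gives $\sum_{x\in L_b}\size(t_0,x)=\sumlight(t,b)=H$, and since $t$ is a Union tree, the Union condition at $b$ restricted to its light children (its heavy children all have size $>H$ and so are irrelevant to the threshold $\size(y)-1<H$ of a light child $y$) is precisely the stated inequality for $L_b$. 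Relabelling the baskets as $1,\dots,K+1$ yields the partition.

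For the converse, suppose the partition $L_1,\dots,L_{K+1}$ exists. I would build a witness Union tree $t$ by pushing, for $i=1,\dots,K$, every node of $L_i$ directly into the $i$-th empty basket (each such node is a sibling of the basket until the instant it is pushed, so the pushes are legal) and leaving $L_{K+1}$ at the root. Every push targets a basket, so $t_0\vdash^* t$, and it remains to check via Theorem~\ref{thm-union} that $t$ is a Union tree. At each filled basket the heavy child of size $H+1$ is justified by its light children summing to $\sum_{x\in L_i}\size(t_0,x)=H$, and each light child by the internal inequality of $L_i$. At the root: the members of $L_{K+1}$ are handled by the inequality of $L_{K+1}$; the unique non-basket heavy child of size $H+1$ is justified because the lighter (light) children sum to exactly $H$; and each filled basket, now of size $1+(H+1)+H=2H+2$, is justified because the light children and the non-basket heavy node together sum to $H+(H+1)=2H+1=(2H+2)-1$. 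Thus $t$ is a Union tree and $t_0$ is a Union-Find tree by Theorem~\ref{thm-uf-push}.

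The main obstacle I anticipate is the bookkeeping in the forward direction: one must rule out that the invariant baskets have nested into one another in a way that would spoil the clean ``one bin per basket'' picture, and confirm that every depth-one light node of $t_0$ settles as a direct light child of precisely one of the $K+1$ baskets, with its own leaf descendants remaining at greater depth and hence not contributing to any $\sumlight$. Once this is pinned down, the charge-$0$ identity and the Union condition deliver the two required properties of each $L_i$ mechanically. The converse is comparatively routine, its only real content being the size arithmetic at the root, which forces the flat-tree parameters ($H+1$ for the lone non-basket heavy node, $2H+2$ for each filled basket) to interlock exactly.
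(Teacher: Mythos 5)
Your proof is correct and follows essentially the same route as the paper's: both directions rely on Proposition~\ref{prop-save-us-lord-almighty-basket-push-pretty-please} plus the charge-zero consequence to extract the partition from the light children of the $K+1$ baskets, and the converse is the same explicit push-and-verify construction with the identical size arithmetic ($H$, $H+1$, $2H+1$, $2H+2$). The only difference is cosmetic: you spell out the bookkeeping that every depth-one light node settles as a direct light child of exactly one basket, which the paper dispatches in a single sentence.
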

\begin{proof}
Recall that non-basket nodes of a flat tree satisfy the Union condition.
Assume the set $\mathcal{L}$ of the depth-one light nodes of $t_0$ can be partitioned into sets $L_i$, $i=1,\ldots,K+1$ as above.
Let $y_1,\ldots,y_K$ be the empty basket nodes of $t_0$. Then by pushing every member of $L_i$ into the basket $y_i$
(and leaving members of $L_{K+1}$ at depth one) we arrive to a tree $t$ whose basket nodes satisfy the Union condition
(their light children do not violate the condition due to the assumption on $L_i$, and their only heavy child 
having size $H+1$ does not violate the condition either since $\sumlight(t,x)=\sum_{x\in L_i}\size(t_0,x)=H$ due also to
the assumption on $L_i$. Finally, the root node has $K$ children (the initially empty baskets) of size $2H+2$ but since
it has light children of total size $H$ and a heavy child of size $H+1$, their sizes summing up to $2H+1$, the depth-one baskets
also do not violate the Union condition.
Hence $t_0\vdash^*t$ for some Union tree $t$, thus is a Union-Find tree.

For the other direction, assume $t_0$ is a Union-Find tree. Then since $t_0$ is also a flat tree, it can be transformed into a Union tree $t$
by repeatedly pushing nodes into baskets only. Since the pushed nodes' original parents are baskets as well (since they are parents of
the basket into which we push), a node is a child of a basket node in $t$ if and only if it is a child of a (possibly other) basket node in $t_0$.
We also know that the charge of each node in the Union tree we gain at the end has to be zero, in particular, each basket $x$
still has to have a heavy child of size exactly $H+1$ and $\sumlight(t,x)=H$ has to hold. Let $y_1,\ldots,y_{K+1}$ stand for the
basket nodes of $t_0$ (and of $t$ as well): then, defining $L_i$ as
the set of light children of $y_i$ in $t$ suffices.
\end{proof}

At this point we recall that the following problem $3$-\textsc{Partition} is $\NP$-complete in the strong sense~\cite{Garey:1979:CIG:578533}:
given a list $a_1,\ldots,a_{3m}$ of positive integers with the value $B=\frac{\sum_{i=1}^{3m}a_i}{m}$ being an integer,
such that for each $1\leq i\leq 3m$ we have $\frac{B}{4}<a_i<\frac{B}{2}$,
does there exist a partition $\mathcal{B}=\{B_1,\ldots,B_k\}$ of the set $\{1,\ldots,3m\}$ satisfying
$\sum_{i\in B_j}a_i=B$ for each $1\leq j\leq k$?

(Here ``in the strong sense'' means that the problem remains $\NP$-complete even if the numbers are encoded in unary.)

Observe that by the condition $\frac{B}{4}<a_i<\frac{B}{2}$ each set $B_j$ has to contain exactly three elements
(the sum of any two of the input numbers is less than $B$ and the sum of any four of them is larger than $B$),
thus in particular in any solution above $k=m$ holds.

Note also that for any given instance $I=a_1,\ldots,a_{3m}$ of the above problem and any given offset $c\geq 0$, the instance
$I'=a'_1,\ldots,a'_{3m}$ with $a'_i=a_i+c$ for each $i$ and with $B'=B+3c$, the set of solutions of $I$ and $I'$ coincide.
Indeed, the instance $I'$ still satisfies
\begin{eqnarray*}
\frac{B'}{4}&=&\frac{B+3c}{4}=\frac{B}{4}+\frac{3}{4}c<\frac{B}{4}+c<a_i+c=a'_i\\
&=&a_i+c<\frac{B}{2}+\frac{3}{2}c=\frac{B+3c}{2}=\frac{B'}{2},
\end{eqnarray*}
hence each solution $\mathcal{B}'=\{B'_1,\ldots,B'_k\}$ of $I'$ still contains triplets and $\sum_{i\in B'_j}a'_i=3c+\sum_{i\in B'_j}a_i$
which is $B'$ if and only if $\sum_{i\in B'_j}a_i=B$, thus any solution of $I'$ is also a solution of $I$, the other direction being also straightforward to check.

Thus, by setting the above offset to $c=\lceil(1+\max\{2^{\lceil \log B\rceil},2^4\}-B)/3\rceil$ (in which case $B'=2^D+d$ for some suitable integers $D>3$ and
$d\in\{1,2,3\}$) we get that the following problem is also strongly $\NP$-complete:
\begin{definition}[$3$-\textsc{Partition}']
{\noindent\bf Input}: A sequence $a_1,\ldots,a_{3m}$ of positive integers such that $B=\frac{\sum_{i=1}^{3m}a_i}{m}$ is a positive integer of
the form $2^D+d$ for some integer $D>3$ and $d\in\{1,2,3\}$
and $\frac{B}{4}<a_i<\frac{B}{2}$ for each $i=1,\ldots,3m$.

{\noindent\bf Output}: Does there exist a partition $\mathcal{B}=\{B_1,\ldots,B_k\}$ of the set $\{1,\ldots,3m\}$ satisfying
$\sum_{i\in B_j}a_i=B$ for each $j\in\{1,\ldots,k\}$?

Observe that in any solution, $k=m$ and each $B_j$ consists of exactly three elements.
\end{definition}

We are now ready to show the $\NP$-completeness of the recognition of Union-Find trees via a (logspace, many-to-one) reduction
from the $3$-\textsc{Partition}' problem to it. To each instance $I=a_1,\ldots,a_{3m}$ of the $3$-\textsc{Partition}' problem
we associate the following flat tree $t(I)$:
\begin{itemize}
\item The number $K$ of empty baskets in $t(I)$ is $m-1$.
\item The heaviness threshold parameter $H$ of the tree is $B+2^{D-1}-1$ where $B=2^D+d$ is the target sum $\frac{\sum_{i=1}^{3m}a_i}{m}$
with $D>3$ being an integer and $d\in\{1,2,3\}$.
\item There are $3m+(D-1)m$ light nodes of depth one in $t(I)$: first, to each member $a_i$ of the input a light node $x_i$ of
  size $a_i$ is associated, and to each index $1\leq i\leq m$ and $0\leq j< D-1$, a light node $y_i^j$ of size
  $w_i^j=2^j$ is associated.
\end{itemize}
Note that since the $3$-\textsc{Partition}' problem is strongly $\NP$-complete we can assume that the input numbers $a_i$ are
encoded in unary, thus the tree $t(I)$ can be indeed built in logspace (and in polytime).

(See Figure~\ref{fig-reduction} for an example.)

\begin{figure*}[h]
\centering
\begin{tikzpicture}[every node/.style={thick,circle,inner sep=0pt,minimum size=0.1cm}]
\node (root) at (3.5,0) [draw] {};
\drawweight{-3.6}{-0.5}{4}{0.5}{-1}{(root)}
\drawweight{-2.8}{-0.5}{4}{0.5}{-1}{(root)}
\drawweight{-2}{-0.5}{4}{0.5}{-1}{(root)}
\drawweight{-1.2}{-0.5}{4}{0.5}{-1}{(root)}
\drawweight{-0.4}{-0.5}{4}{0.5}{-1}{(root)}
\drawweight{0.4}{-0.5}{5}{0.6}{-1}{(root)}
\drawweight{1.2}{-0.5}{5}{0.6}{-1}{(root)}
\drawweight{2.1}{-0.5}{6}{0.7}{-1}{(root)}
\drawweight{3}{-0.5}{6}{0.7}{-1}{(root)}
{\node (r) at (3.5,-0.5) [draw] {}; \path (r) edge (root); }
{\node (r) at (3.7,-0.5) [draw] {}; \path (r) edge (root); }
{\node (r) at (3.9,-0.5) [draw] {}; \path (r) edge (root); }
{\node (r) at (4.1,-0.5) [draw] {}; \path (r) edge (root); \node (c) at (4.1,-1) [draw] {}; \path (r) edge (c);}
{\node (r) at (4.3,-0.5) [draw] {}; \path (r) edge (root); \node (c) at (4.3,-1) [draw] {}; \path (r) edge (c);}
{\node (r) at (4.5,-0.5) [draw] {}; \path (r) edge (root); \node (c) at (4.5,-1) [draw] {}; \path (r) edge (c);}
\drawweight{5}{-0.5}{3}{0.5}{-1}{(root)}
\drawweight{5.7}{-0.5}{3}{0.5}{-1}{(root)}
\drawweight{6.4}{-0.5}{3}{0.5}{-1}{(root)}
\drawweight{8}{-0.5}{24}{2}{-1}{(root)}
{\node (r1) at (9.3,-0.5) [draw] {}; \path (r1) edge (root); \drawweight{9.3}{-1}{24}{2}{-1.5}{(r1)} }
{\node (r1) at (11.5,-0.5) [draw] {}; \path (r1) edge (root); \drawweight{11.5}{-1}{24}{2}{-1.5}{(r1)} }
\end{tikzpicture}
\caption{Illustrating the reduction: $t(I)$ for $I=(5,5,5,5,5,6,6,7,7)$, $m=3$, $B=17$. Then $D=4$, $d=1$ and $H=24$. Small weights $w_i^j$ are of size $1,2,4$.}
\label{fig-reduction}
\end{figure*}
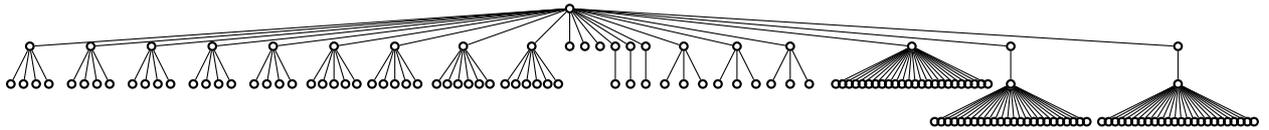

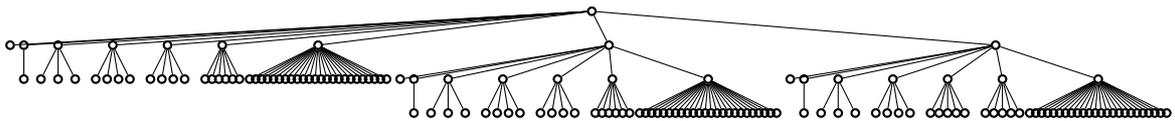
\begin{figure*}[h]
\centering
\begin{tikzpicture}[scale=0.9,every node/.style={thick,circle,inner sep=0pt,minimum size=0.1cm}]
\node (root) at (8.5,0.5) [draw] {};
{\node (r1) at (0,0) [draw] {}; \path (r1) edge (root);}
{\node (r) at (0.2,0) [draw] {}; \path (r) edge (root); \node (c) at (0.2,-0.5) [draw] {}; \path (r) edge (c);}
\drawweight{0.7}{0}{3}{0.5}{-0.5}{(root)}
\drawweight{1.5}{0}{4}{0.5}{-0.5}{(root)}
\drawweight{2.3}{0}{4}{0.5}{-0.5}{(root)}
\drawweight{3.1}{0}{6}{0.5}{-0.5}{(root)}
\drawweight{4.5}{0}{24}{2}{-0.5}{(root)}
{
\node(r1) at (8.75,0) [draw] {}; \path (r1) edge (root);
{\node (r) at (5.7,-0.5) [draw] {}; \path (r) edge (r1);}
{\node (r) at (5.9,-0.5) [draw] {}; \path (r) edge (r1); \node (c) at (5.9,-1) [draw] {}; \path (r) edge (c);}
\drawweight{6.4}{-0.5}{3}{0.5}{-1}{(r1)}
\drawweight{7.2}{-0.5}{4}{0.5}{-1}{(r1)}
\drawweight{8.0}{-0.5}{4}{0.5}{-1}{(r1)}
\drawweight{8.8}{-0.5}{6}{0.5}{-1}{(r1)}
\drawweight{10.2}{-0.5}{24}{2}{-1}{(r1)}
}
{
\node(r1) at (14.4,0) [draw] {}; \path (r1) edge (root);
{\node (r) at (11.4,-0.5) [draw] {}; \path (r) edge (r1);}
{\node (r) at (11.6,-0.5) [draw] {}; \path (r) edge (r1); \node (c) at (11.6,-1) [draw] {}; \path (r) edge (c);}
\drawweight{12.1}{-0.5}{3}{0.5}{-1}{(r1)}
\drawweight{12.9}{-0.5}{4}{0.5}{-1}{(r1)}
\drawweight{13.7}{-0.5}{5}{0.5}{-1}{(r1)}
\drawweight{14.5}{-0.5}{5}{0.5}{-1}{(r1)}
\drawweight{15.9}{-0.5}{24}{2}{-1}{(r1)}
}
\end{tikzpicture}
\caption{The Union tree corresponding to a solution of the instance on Figure~\ref{fig-reduction}.}
\label{fig-solution}
\end{figure*}

In order to ease notation we say that a member $a_i$ of some sequence (or multiset) $a_1,\ldots,a_n$
satisfies the Union condition if the sum of the members of the sequence that are less than $a_i$ is at least $a_i-1$,
i.e. $\mathop\sum\limits_{a_j<a_i}a_j\geq a_i-1$. In addition we say that the sequence itself satisfies
the Union condition if each of its elements does so.
It is clear that a node $x$ of a tree $t$ satisfies the Union condition if and only if the multiset $\{\size(t,y):y\in\children(t,x)\}$
does so.

The following lemma states that the above construction is indeed a reduction:
\begin{lemma}
For any instance $I$ of the $3$-\textsc{Partition}' problem, $I$ has a solution iff $t(I)$ is a Union-Find tree.
\end{lemma}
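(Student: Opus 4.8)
The plan is to route both directions through the combinatorial characterization of Proposition~\ref{prop-final}. Since $t(I)$ is flat with $K=m-1$ empty baskets, it is a Union-Find tree if and only if its set $\mathcal L$ of depth-one light nodes admits a partition into $m$ blocks $L_1,\ldots,L_m$, each of total size $H$ and each satisfying the Union condition internally. The members of $\mathcal L$ are the $3m$ value nodes $x_i$ of size $a_i\in(B/4,B/2)$ together with, for every $1\le i\le m$, one copy of each power $2^0,2^1,\ldots,2^{D-2}$; these powers sum to $2^{D-1}-1$, so a full set of them together with a triple of values summing to $B$ has total size exactly $H=B+2^{D-1}-1$. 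I will use throughout that every value exceeds every weight, since $a_i>B/4=2^{D-2}+d/4>2^{D-2}\ge 2^\ell$ for each $\ell\le D-2$.

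For the ``only if'' direction, given a solution $\{B_1,\ldots,B_m\}$ I would let $L_j$ consist of the three value nodes indexed by $B_j$ together with the $j$-th complete set of powers $2^0,\ldots,2^{D-2}$. Each block then has size $B+(2^{D-1}-1)=H$, so only the internal Union condition remains. Here the powers of two are self-supporting, since for each $2^\ell$ the strictly smaller powers already sum to $2^\ell-1$; and each of the three values is supported because all the powers lie below it (as every value exceeds $2^{D-2}$), while the constraints $B/4<a<B/2$ together with $a_p+a_q+a_r=B$ guarantee that the elements strictly below any chosen value sum to at least its size minus one. Proposition~\ref{prop-final} then gives that $t(I)$ is a Union-Find tree; these verifications are routine arithmetic.

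For the ``if'' direction I would start from a partition $L_1,\ldots,L_m$ granted by Proposition~\ref{prop-final} and show the powers of two are forced to be distributed perfectly evenly. The engine is an induction on $\ell=0,1,\ldots,D-2$ proving that every block contains exactly one copy of $2^\ell$. For the base case, the smallest element of any block has prefix-sum $0$, so the Union condition forces it to have size $1$; hence each block owns a $1$, and as there are exactly $m$ ones and $m$ blocks, each owns exactly one. For the step, assume every block already owns exactly one of each $2^0,\ldots,2^{\ell-1}$; then the $m$ available copies of those powers are exhausted, and since no value is smaller than $2^\ell$, the elements strictly below $2^\ell$ in any block are precisely $1,2,\ldots,2^{\ell-1}$, summing to $2^\ell-1$. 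As a block has total size $H>2^\ell-1$, it must contain a further element $v$; taking $v$ smallest of size at least $2^\ell$, the Union condition gives $2^\ell-1\ge v-1$, so $v\le 2^\ell$, whence $v=2^\ell$. Thus every block contains a $2^\ell$, and counting $m$ copies against $m$ blocks again forces exactly one per block, completing the induction.

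Consequently every block contains the entire set of powers, of total size $2^{D-1}-1$, so its value nodes sum to $H-(2^{D-1}-1)=B$; since each value lies strictly between $B/4$ and $B/2$, exactly three values occur in each block, and reading off these triples yields a solution of $I$. I expect this even-distribution induction — pinning the weights down through the tight prefix-sums of the powers of two combined with the $m$-copies-versus-$m$-blocks counting — to be the main obstacle; the bookkeeping identities ($\sum_i a_i=mB$ and the powers summing to $2^{D-1}-1$) and the elementary Union-condition checks are comparatively mechanical.
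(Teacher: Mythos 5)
Your proposal is correct and follows essentially the same route as the paper: both directions are reduced to the partition criterion of Proposition~\ref{prop-final}, the forward direction builds each block from a triple plus one full set of powers $2^0,\ldots,2^{D-2}$, and the backward direction uses the same induction on $\ell$ (tight prefix-sums of the powers plus the $m$-copies-versus-$m$-blocks count) to force exactly one copy of each power per block. The only difference is that you leave the forward-direction inequalities ($a\le B/3<2^{D-1}$ for the smallest value and $a+2^{D-1}-1\ge b-1$ for the larger ones) as ``routine arithmetic,'' which the paper carries out explicitly.
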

\begin{proof}
For one direction, assume $\{B_1,\ldots,B_m\}$ is a solution of the instance $I=a_1,\ldots,a_{3m}$.
Then the multiset of the sizes of the light nodes can be partitioned into sets $\mathcal{L}=(L_1,\ldots,L_m)$
as $L_i=\{a_j:j\in B_i\}\cup\{w_i^j:0\leq j<D-1\}$. It is clear that $\sum_{\ell\in L_i}\ell=B+2^{D-1}-1=H$.
Thus, by Proposition~\ref{prop-final} we only have to show that each $L_i$ satisfies the Union condition.
For the elements $w_i^j$ of size $2^j$ it is clear since $\sum_{j'=0}^{j-1}w_i^{j'}=2^j-1$.
Now let $a$ be the smallest element of $B_i$. Then, since $B_i$ consists of three integers summing up to $B$,
we get that $a\leq\frac{B}{3}$. On the other hand, $\frac{B}{4}<a$ by the definition of the $3$-\textsc{Partition}' problem.
Recall that $B=2^D+d$ for some $d\in\{1,2,3\}$: we get that $2^{D-2}<a$, thus in particular each weight $w_i^j$ is smaller than $a$.
Summing up all these weights we get $\sum_{j=0}^{D-2}w_i^j=2^{D-1}-1$.
We claim that $a\leq\frac{B}{3}=\frac{2^D+d}{3}<2^{D-1}-1$.
Indeed, multiplying by $3$ we get $2^D+d<2^D+2^{D-1}-3$, subtracting $2^D$ and adding $3$ yields $d+3<2^{D-1}$
which holds since $d\leq 3$ and $D>3$.
Thus we have that $a$ satisfies the Union condition as well.
Then, if $b>a$ is also a member of $B_i$, then it suffices to show $a+\sum_{j=0}^{D-2}w_i^j\geq b-1$ but since
$a\geq\frac{B}{4}>2^{D-2}$ and $\sum_{j=0}^{D-2}w_i^j=2^{D-1}-1$, we get the left sum exceeds $2^{D-1}+2^{D-2}-1$
which is at least $b-1$ since $b<\frac{B}{2}=\frac{2^D+d}{2}\leq 2^{D-1}+2^{D-2}$ (since $d\leq 3$ and $2^{D-2}\geq 2$ if $D\geq 3$).
Thus by Proposition~\ref{prop-final}, $t(I)$ is indeed a Union-Find tree. (See Figure~\ref{fig-solution}.)

For the other direction, suppose $t(I)$ is a Union-Find tree. Then by Proposition~\ref{prop-final},
the multiset of the sizes of the light nodes can be partitioned into sets $\mathcal{L}=\{L_1,\ldots,L_m\}$ such 
that each $L_i$ sums up exactly to $H=B+2^{D-1}-1$ and each $L_i$ satisfies the Union condition.

First we show that each $L_i$ contains exactly one ``small'' weight $w_k^j$ for each $j=0,\ldots,D-2$.
(Note that each $a_j$ exceeds $2^{D-2}$ hence the name of these weights.)
We prove this by induction on $j$. The claim holds for $j=0$ since in a Union-Find tree each
inner node has a leaf child. By induction,
the smallest $j$ members of $L_i$ have sizes $2^0,2^1,\ldots,2^{j-1}$, summing up to $2^j-1$.
Since all the weights $a_k$ are larger than $2^{D-2}$ as well as the weights $w_k^{j'}$ for $j'>j$,
none of these can be the $(j+1)$th smallest integer in $L_i$ without violating the Union condition.
Thus, each set $L_i$ has to have some $w_{k_i}^j$ as its $j+1$th smallest element, and since both the
number of these sets and the number of the small weights of size $2^j$ are $m$ we get that
each set $L_i$ contains exactly one small weight of size $2^j$.

Thus, the small weights sum up exactly to $2^{D-1}-1$ in each of the sets $L_i$, hence the weights $a_i$
have to sum up exactly to $H-(2^{D-1}-1)=2^D+d=B$ in each of these sets, yielding a solution to the
instance $(a_1,\ldots,a_{3m})$ of the $3$-\textsc{Partition}' problem and concluding the proof.
\end{proof}

Thus, since the strongly $\NP$-complete $3$-\textsc{Partition}' problem reduces to the problem of deciding whether a \emph{flat} tree is a Union-Find tree,
and moreover, if a flat tree is a Union-Find tree, then it can be constructed from a Union tree by applying finitely many
path compressions (all one has to do is to ``move out'' the light nodes from the baskets by calling a $\mathrm{find}$ operation
on each of them successively), we have proven the following:
\begin{theorem}
\label{thm-main}
It is $\NP$-complete already for flat trees to decide whether
\begin{itemize}
\item[i)] a given (flat) tree is a Union-Find tree and
\item[ii)] whether a given (flat) tree can be constructed from a Union tree by applying a number of path compression operations.
\end{itemize}
\end{theorem}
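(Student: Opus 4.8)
The plan is to assemble the machinery already built up: both decision problems lie in $\NP$, and $\NP$-hardness (even restricted to flat trees) follows from the reduction of the preceding lemma, once I check that for flat trees the two formulations i) and ii) have the same yes-instances.

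First I would dispatch membership in $\NP$. For i) this is already essentially recorded: by Theorem~\ref{thm-uf-push} a tree is a Union-Find tree exactly when $t\vdash^* s$ for some Union tree $s$, and a witnessing push sequence has length at most $n^2$ (since the total node size $\sum_v\size(t,v)$ strictly increases with every push), so one guesses the sequence, checks each step, and verifies the Union condition of the final tree in polynomial time via Theorem~\ref{thm-union}. For ii) a certificate is a Union tree $s$ on the same node set together with a sequence of $\compress$ operations yielding $t$; each effective $\compress$ strictly decreases $\sum_v\size(\cdot,v)$, so at most $n^2$ of them are ever needed, and the check is again polynomial.

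Next I would record the hardness of i). The preceding (reduction) lemma exhibits a logspace-computable map $I\mapsto t(I)$ with $t(I)$ flat and with $I$ solvable iff $t(I)$ is a Union-Find tree; since $3$-\textsc{Partition}' is strongly $\NP$-complete this is a genuine logspace many-one reduction, giving $\NP$-hardness of i) already on flat trees. The one genuinely new point is to transfer this to ii), i.e.\ to show that \emph{for a flat tree} $t_0$ being a Union-Find tree is equivalent to being obtainable from some Union tree by $\compress$ operations. One direction is immediate: Union trees are Union-Find trees and $\compress$ preserves membership in the Union-Find class by definition, so anything built from a Union tree by compressions is a Union-Find tree. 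For the converse I would use the ``move-out'' idea. If the flat $t_0$ is Union-Find then, by Proposition~\ref{prop-final} together with the basket-only pushing of the earlier lemmas, there is a Union tree $t$ with $t_0\vdash^* t$ obtained by pushing each depth-one light node into one of the empty baskets. Starting from $t$ I would call $\compress(t,x)$ for each light node $x$ that now sits one level below a basket: since the ancestors of such an $x$ are exactly $x$ itself, its basket (already a depth-one child of the root), and the root, this reattaches $x$ directly to the root while leaving the basket's heavy child of size $H+1$ and the remaining light nodes untouched. Doing this for every pushed-in light node restores each basket to an empty basket and returns every light node to depth one, recovering $t_0$ exactly.

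Finally I would conclude that on flat trees i) and ii) have identical yes-instances, so the same reduction $I\mapsto t(I)$ witnesses $\NP$-hardness of ii) as well; combined with the membership arguments, both problems are $\NP$-complete already on flat trees. I expect the main obstacle to be the converse half of the flat-tree equivalence: one must verify that the successive $\compress$ calls recover \emph{precisely} $t_0$ with no collateral rearrangement, and this is exactly where the definition of $\compress$ (reattaching all non-root ancestors of its argument, the argument node included) is doing the work.
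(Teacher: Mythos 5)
Your proposal is correct and follows essentially the same route as the paper: $\NP$-membership via the $n^2$ bound on pushes (resp.\ compressions), hardness of i) from the reduction $I\mapsto t(I)$ of the preceding lemma, and the equivalence of i) and ii) on flat trees by ``moving out'' the pushed-in light nodes with one $\compress$ each. You merely spell out in more detail what the paper states in a sentence (in particular the explicit certificate for ii) and the check that the compressions recover $t_0$ exactly), which is a faithful elaboration rather than a different argument.
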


\section{Conclusion, future directions}
We have shown that unless $\mathbf{P}=\NP$, there is no efficient algorithm to check whether a given tree is a valid Union-Find tree,
assuming union-by-size strategy and usage of path compression, since the problem is $\NP$-complete. A natural question is whether
the recognition problem remains $\NP$-hard assuming union-by-rank strategy (and of course path compression)? 
Since the heights of merged trees do not add up, only increase by at most one, there is no ``obvious'' way to encode arithmetic
into the construction of these trees, and even the characterization by the push operation is unclear to hold in that case
(since in that setting, path compression can alter the order of the subsequent merging).

\bibliography{biblio}{}
\bibliographystyle{plain}

\end{document}